\newif\ifapp
\newcommand{\lsp}{\hspace{0.1em}}
\def\identity{\leavevmode\hbox{\small1\kern-3.8pt\normalsize1}}
\newtheorem{theorem}{Theorem}
\newtheorem{lemma}{Lemma}
\newtheorem{proposition}{Proposition}
\newcommand{\Cl}{\mathrm{Cl}}
\newcommand{\Sym}{\mathrm{Sym}}
\newcommand{\bbS}{\mathbb{S}}
\newcommand{\bbW}{\mathbb{W}}
\newcommand{\bbone}{\mathbbm{1}}
\newcommand{\supp}{\operatorname{supp}}
\newcommand{\caA}{\mathcal{A}}
\newcommand{\caD}{\mathcal{D}}
\newcommand{\caG}{\mathcal{G}}
\newcommand{\caH}{\mathcal{H}}
\newcommand{\caL}{\mathcal{L}}
\newcommand{\caP}{\mathcal{P}}
\newcommand{\caQ}{\mathcal{Q}}
\newcommand{\caS}{\mathcal{S}}
\newcommand{\caT}{\mathcal{T}}
\newcommand{\caU}{\mathcal{U}}
\newcommand{\rmd}{\mathrm{d}}
\newcommand{\rme}{\operatorname{e}}
\newcommand{\rmi}{\mathrm{i}}
\newcommand{\bs}{\mathrm{bs}}
\newcommand{\rmA}{\mathrm{A}}
\newcommand{\rmB}{\mathrm{B}}
\newcommand{\rmC}{\mathrm{C}}
\newcommand{\rmT}{\mathrm{T}}
\newcommand{\rmU}{\mathrm{U}}
\newcommand{\sgn}{\mathrm{sgn}}
\newcommand{\scrA}{\mathscr{A}}
\newcommand{\scrB}{\mathscr{B}}
\newcommand{\scrE}{\mathscr{E}}
\newcommand{\scrK}{\mathscr{K}}
\newcommand{\scrM}{\mathscr{M}}
\newcommand{\tPhi}{\tilde{\Phi}}
\newcommand{\tPsi}{\tilde{\Psi}}
\newcommand{\tscrM}{\tilde{\scrM}}
\newcommand{\lref}[1]{Lemma~\ref{#1}}
\newcommand{\lsref}[1]{Lemmas~\ref{#1}}
\newcommand{\thref}[1]{Theorem~\ref{#1}}
\newcommand{\thsref}[1]{Theorems~\ref{#1}}
\newcommand{\Thref}[1]{Theorem~\ref{#1}}
\newcommand{\pref}[1]{Proposition~\ref{#1}}
\newcommand{\Pref}[1]{Proposition~\ref{#1}}
\newcommand{\eref}[1]{Eq.~\textup{(\ref{#1})}}
\newcommand{\eqsref}[2]{Eqs.~(\ref{#1}) and (\ref{#2})}
\newcommand{\Eref}[1]{Equation~\textup{(\ref{#1})}}
\newcommand{\tref}[1]{Table~\ref{#1}}
\newcommand{\sref}[1]{Sec.~\ref{#1}}
\newcommand{\Sref}[1]{Section~\ref{#1}}
\newcommand{\fref}[1]{Fig.~\ref{#1}}
\newcommand{\aref}[1]{Appendix~\ref{#1}}
\def\<{\langle}  
\def\>{\rangle}  
\newcommand{\rcite}[1]{Ref.~\cite{#1}}
\newcommand{\equad}{\,\hphantom{=}\,}
\begin{document}
	\title{Optimal estimation of three parallel spins with genuine and restricted collective measurements}

  \author{Changhao Yi}
    \affiliation{State Key Laboratory of Surface Physics, Department of Physics, and Center for Field Theory and Particle Physics,
Fudan University, Shanghai 200433, China}
    \affiliation{Institute for Nanoelectronic Devices and Quantum Computing, Fudan University, Shanghai 200433, China}
    \affiliation{Shanghai Research Center for Quantum Sciences, Shanghai 201315, China}
	
	\author{Kai Zhou} 
	\affiliation{Laboratory of Quantum Information, University of Science and Technology of China, Hefei 230026, China}
	\affiliation{CAS Center For Excellence in Quantum Information and Quantum Physics, University of Science and Technology of China, Hefei 230026, China}
 
	\author{Zhibo Hou} 
	\affiliation{Laboratory of Quantum Information, University of Science and Technology of China, Hefei 230026,  China}
	\affiliation{CAS Center For Excellence in Quantum Information and Quantum Physics, University of Science and Technology of China, Hefei 230026,  China}
 \affiliation{Hefei National Laboratory, Hefei 230088, China}

 	\author{Guo-Yong Xiang}
	\affiliation{Laboratory of Quantum Information, University of Science and Technology of China, Hefei 230026, China}
	\affiliation{CAS Center For Excellence in Quantum Information and Quantum Physics, University of Science and Technology of China, Hefei 230026, China}
 \affiliation{Hefei National Laboratory, Hefei 230088,  China}
 
    \author{Huangjun~Zhu}
    \email{zhuhuangjun@fudan.edu.cn}
    \affiliation{State Key Laboratory of Surface Physics, Department of Physics, and Center for Field Theory and Particle Physics,
Fudan University, Shanghai 200433, China}
    \affiliation{Institute for Nanoelectronic Devices and Quantum Computing, Fudan University, Shanghai 200433, China}
    \affiliation{Shanghai Research Center for Quantum Sciences, Shanghai 201315, China}
 \affiliation{Hefei National Laboratory,  Hefei 230088,  China}

\begin{abstract}
 Collective measurements on identical and independent quantum systems can offer advantages in information
 extraction compared with individual measurements. However, little is known about the distinction between restricted collective measurements and genuine collective measurements in the multipartite setting. In this work we
 establish a rigorous performance gap based on a simple and old estimation problem, the estimation of a random
 spin state given three parallel spins. Notably, we derive an analytical formula for the maximum estimation fidelity
 of biseparable measurements and clarify its fidelity gap from genuine collective measurements. Moreover, we
 clarify the structure of optimal biseparable measurements. It turns out that  the maximum estimation fidelity can be
 achieved by two- and one-copy measurements assisted by one-way communication in one direction, but not the
 other way. Our work reveals a rich landscape of multipartite nonclassicality in quantum measurements instead
 of quantum states and is expected to trigger further studies.
\end{abstract}
	
 \date{\today}
\maketitle

\section{Introduction}
Quantum measurements are the key to extracting information from quantum systems \cite{NielC10book} and play crucial roles in various tasks in quantum information processing, such as quantum state estimation, quantum metrology, quantum communication, and quantum computation. When two or more quantum systems are available, collective measurements on all  quantum systems together  may extract more information than individual measurements
\cite{PereW91,MassP95,GisiP99,Mass00,BagaBGM06S,Zhu12the,ZhuH18U}, even if there is no entanglement or correlation among these quantum systems. This intriguing phenomenon is a manifestation of nonclassicality in quantum measurements rather than quantum states. Moreover, collective measurements are quite useful in many practical applications, including quantum state estimation \cite{MassP95,BagaBGM06S,HaahHJW16,ODonW16,Zhu12the,ZhuH18U},  direction estimation \cite{GisiP99,Mass00}, multiparameter estimation \cite{VidrDGJ14,LuW21,ChenCY22}, shadow estimation \cite{Aaro18,GriePS24,LiuLYZ24,LiYZZ24},
 quantum state discrimination
\cite{PereW91,HiggDBP11,MartHSS21}, quantum learning
\cite{HuanKP21,ChenCHL22,AharCQ22,GrewIKL24},
entanglement detection and distillation \cite{Horo03,LindMP98,DehaVDV03}, and nonlocality distillation \cite{EftaWC23}. The power of collective measurements has been demonstrated in a number of experiments \cite{HouTSZ18,TangHSZ20,ZhouYYH23,ConlVMP23,ConlELA23,TianYHX24}.

Although collective measurements are advantageous for many applications, their realization in  experiments is quite challenging, especially for multicopy  collective measurements.  Actually, almost all experiments in this direction are restricted to two-copy collective measurements, and a genuine three-copy collective measurement was realized only very recently \cite{ZhouYYH23}. In view of this situation, it is natural to ask whether there is a fundamental gap between restricted collective measurements on limited copies of quantum states and genuine collective measurements, which represent
the ultimate limit. This  problem is of interest to both foundational studies and practical applications. 
Unfortunately,  little is known about this problem, although the counterpart for quantum states has been well studied \cite{HoroHHH09,GuhnT09,BrunCPS13}. Conceptually, the very basic definitions remain to be clarified.
Technically, it is substantially more difficult to analyze the performance of restricted collective measurements.

In this work we start to explore the rich territory of multipartite nonclassicality in quantum measurements by virtue of a simple and old estimation problem, the estimation of a random spin state given three parallel spins \cite{MassP95,DerkBE98,LatoPT98,Haya98,BrusM99,AcinLP00,BagaBM02,BagaMM05,HayaHH05}.
To set the stage, we first introduce rigorous definitions of biseparable measurements (which encompass all restricted collective measurements) and genuine collective measurements. Then, we derive an analytical formula for the maximum estimation fidelity of biseparable measurements, which clearly demonstrates a fidelity gap from genuine collective measurements. Moreover, we clarify the structure of optimal biseparable measurements and highlight the role of mutually unbiased bases. In addition, we determine the maximum estimation fidelity 
based on one- and two-copy collective measurements assisted by one-way communication. Surprisingly, such strategies can reach the maximum estimation fidelity of biseparable measurements if the communication direction  is chosen properly. By contrast,  the maximum estimation fidelity achievable is strictly lower if the communication direction is reversed. Our work reveals a strict hierarchy of multicopy collective measurements and a plethora of nonclassical phenomena rooted in quantum measurements, which merit further studies.

The rest  of this paper is organized as follows. In \sref{sec:definition} we begin with the formal definitions of biseparable measurements and genuine collective measurements. In \sref{sec:state_estimation}, we 
review an old estimation problem and 
the concept of estimation fidelity.
In \sref{sec:CollectiveOpt} we review an optimal measurement for estimating three parallel spins, which is genuinely collective.
In \sref{sec:RestrictedOpt}, we determine the maximum estimation fidelities
of biseparable measurements, $2+1$ adaptive measurements, and $1+2$ adaptive measurements, respectively, and construct optimal estimation strategies explicitly. \Sref{sec: conclusion} summarizes this paper.

\section{Separable  and collective measurements}
\label{sec:definition}

\subsection{Quantum states and quantum measurements}

Let $\caH$ be a given finite-dimensional Hilbert space; let  $\caL(\caH)$ and $\rmU(\caH)$ be the space of linear operators and  the group of unitary operators on $\caH$, respectively. Quantum states on $\caH$ are represented by positive (semidefinite) operators of trace 1. The set of all quantum states on $\caH$ is denoted by $\caD(\caH)$ henceforth. Quantum measurements on $\caH$ can be described by positive operator-valued measures (POVMs) when post-measurement quantum states are  irrelevant \cite{NielC10book}. Mathematically, a POVM is composed of 
a set of positive operators that sum up to the identity operator, which is denoted by $\bbone$ henceforth. If we perform the POVM $\scrM=\{M_j\}_j$ on the quantum state $\rho$, then the probability of obtaining  outcome $j$ is $\tr(\rho M_j)$ according to the Born rule. To avoid trivial exceptions, we assume that no POVM element is equal to the zero operator in the rest of this paper.

Given two POVMs $\scrA=\{A_j\}_j$ and $\scrB=\{B_k\}_k$ on $\caH$, $\scrA$ is a \textit{coarse graining} of $\scrB$ if it can be realized by performing $\scrB$ followed by data processing \cite{MartM90,Zhu22}. In other words, the POVM elements $A_j$ of $\scrA$ can be expressed as follows:
\begin{equation}
A_j = \sum_{B_k\in\scrB}\Lambda_{jk}B_k \quad \forall \, A_j\in\scrA,
\end{equation}
where $\Lambda$ is a stochastic matrix satisfying $\Lambda_{jk}\geq 0$ and $\sum_j \Lambda_{jk} = 1$. A convex combination of $\scrA$ and $\scrB$ is the disjoint union $w\scrA\sqcup(1-w)\scrB$ of  $w\scrA$ and $(1-w)\scrB$ with $0\leq w\leq 1$, where 
\begin{equation}
w\scrA:=\{wA_j\}_j,\quad (1-w)\scrB:=\{(1-w)B_k\}_k,
\end{equation}
and zero POVM elements can be deleted. Convex combinations of three or more POVMs can be defined in a similar way.

\subsection{Separable and collective measurements}

Now, we turn to quantum states and POVMs on a bipartite system shared by Alice and Bob, where the total Hilbert space  is a tensor product of the form $\caH_\rmT=\caH_\rmA\otimes \caH_\rmB$. A quantum state $\rho$ on $\caH_\rmT$ is a product state if it is a tensor product of two states on $\caH_\rmA$ and $\caH_\rmB$, respectively. The state $\rho$ is
separable if it can be expressed as a convex sum of product states; otherwise, it is entangled \cite{HoroHHH09,GuhnT09}. Note that a pure state on $\caH_\rmT$ is separable if and only if (iff) it is a product state. 
A positive operator on $\caH_\rmT$ is separable if it is proportional to a separable state. A POVM (and similarly for the corresponding measurement) on $\caH_\rmT$ is \emph{separable} if every POVM element is separable.

Let $\scrA=\{A_j\}_j$ and $\scrB=\{B_k\}_k$ be two POVMs on $\caH_\rmA$ and $\caH_\rmB$, respectively. The tensor product of $\scrA$ and $\scrB$ is defined as $\scrA\otimes \scrB:=\{A_j\otimes B_k\}_{j,k}$. Such product POVMs are prominent examples of separable POVMs, but there are more interesting examples. A POVM is  $\rmA \rightarrow \rmB$ one-way adaptive if it has the form $\{A'_j\otimes B_{jk}'\}_{j,k}$, where $\scrA'=\{A_j'\}_j$ is a POVM on $\caH_\rmA$ and $\scrB_j'=\{B_{jk}'\}_k$ for each $j$ is a POVM on $\caH_\rmB$. Such a POVM can be realized by first performing the POVM $\scrA'$ on $\caH_\rmA$ and then performing the POVM $\scrB_j'$ on $\caH_\rmB$ if the first POVM yields outcome~$j$. 

\subsection{Biseparable measurements and genuine collective measurements}

Next, we turn to an $N$-partite quantum system with $N\geq 2$ being a positive integer. Now the total Hilbert space $\caH_\rmT$ can be expressed as  a tensor product of $N$ Hilbert spaces,
\begin{equation}\label{eq:HTtensorDecom}
	\caH_\rmT = \caH_1\otimes \caH_2\otimes \cdots\otimes \caH_N, 
\end{equation}
where $\caH_i$ (for $i=1,2,\ldots, N$) is the Hilbert space of party $i$. Let  $[N]$ denote the set $\{1,2,\ldots,N\}$. Let $\caP=\{I_1, I_2, \ldots, I_m\}$ be a set of disjoint nonempty subsets of $[N]$; then $\caP$ is a \emph{partition} of $[N]$ if $m\geq2$ and $\cup_{k=1}^m I_k=[N]$. Note that the order of $I_k$ and the order of elements in $I_k$ are irrelevant. Alternatively, the partition $\caP$ can be written as $(I_1| I_2| \cdots |I_m)$. The partition $\caP$ is complete if each set $I_k$ contains only one  element, that is $|I_k|=1$ for $k=1,2,\ldots, m$. The partition $\caP$ is a bipartition if $\caP$ contains two elements, that is, $|\caP|=m=2$. When $N=3$ for example, $[N]$ has one complete partition and three bipartitions, namely, 
\begin{equation}
\begin{gathered}
	(\{1\}|\{2\}|\{3\}),\quad  (\{1,2\}|\{3\}),\\  (\{1,3\}|\{2\}),\quad   (\{2,3\}|\{1\}),
\end{gathered}  
\end{equation}
which can be abbreviated as follows if there is no danger of confusion:
\begin{align}
	(1|2|3),\quad  (12|3),\quad (13|2),\quad   (23|1).
\end{align}

Given any partition $\caP=(I_1| I_2| \cdots |I_m)$ of $[N]$, the Hilbert space $\caH_\rmT$ can be expressed as a tensor product as follows:
\begin{equation}
	\caH_\rmT = \bigotimes_{k=1}^{m}\caH_{I_k},\quad \caH_{I_k} : = \bigotimes_{i\in I_k}\caH_i. 
\end{equation}
Here we implicitly assume that all the single-partite Hilbert spaces in the expansion of $\bigotimes_{k=1}^{m}\caH_{I_k}$ are eventually ordered as in \eref{eq:HTtensorDecom}. A quantum state $\rho$ on $\caH$ is $\caP$ separable (or separable with respect to the partition $\caP$) if $\rho$ can be expressed as follows:
\begin{align}
	\rho=\sum_l p_l \rho_l^{I_1}\otimes \rho_l^{I_2}\otimes \cdots\otimes \rho_l^{I_m},
\end{align}
where $\rho_l^{I_k}\in \caD(\caH_{I_k})$  and $\{p_l\}_l$ forms a probability distribution. The state $\rho$ is (completely) separable if it is $\caP$ separable when  $\caP$ is the complete partition. The state $\rho$ is \emph{biseparable} if $\rho$ can be expressed as follows:
\begin{align}
	\rho=\sum_{\caP, \, |\caP|=2} p_\caP \rho_\caP,
\end{align}
where the summation runs over all bipartitions of $[N]$, $\rho_\caP$ is $\caP$ separable, and $\{p_\caP\}_\caP$ forms a probability distribution. By contrast, the state $\rho$ has \emph{genuine multipartite entanglement} if it is not biseparable \cite{GuhnT09}. 

A positive operator on $\caH_\rmT$ is $\caP$ separable (biseparable) if it is proportional to  a quantum state that is $\caP$ separable (biseparable). A POVM $\scrM = \{M_j\}_j$ on $\caH_\rmT$ is $\caP$ separable if every POVM element $M_j$ is  $\caP$ separable. The POVM $\scrM$ is \emph{biseparable} if it is a coarse graining of a POVM of the form 
\begin{align}
	\scrK=\bigsqcup_{\caP,\, |\caP|=2} p_\caP\scrK_\caP,
\end{align}
where  the disjoint union  runs over all bipartitions of $[N]$, the POVM $\scrK_\caP$ is $\caP$ separable, $p_\caP\scrK_\caP$ means an element-wise product, and $\{p_\caP\}_\caP$ forms a probability distribution.  By contrast, a POVM is  \emph{genuinely collective} if it is not biseparable. Note that all POVM elements of a biseparable POVM are biseparable, but a POVM composed of biseparable POVM elements is not necessarily biseparable; see \eref{eq:POVMcol} in \sref{sec:CollectiveOpt} for an example.

\subsection{Biseparable measurements and genuine collective measurements for a tripartite system}

To be concrete, here we focus on  a tripartite quantum system, which represents the simplest nontrivial setting that can manifest multipartite quantum correlations. We assume that the whole system is  shared by Alice, Bob, and Charlie and label the three subsystems by $\rmA$, $\rmB$, and $\rmC$, which are more suggestive than the numbers 1, 2, and 3. Accordingly, the total Hilbert space can be expressed as  $\caH_\rmT=\caH_\rmA\otimes \caH_\rmB\otimes \caH_\rmC$, and the three bipartitions can be expressed  as $(\rmA\rmB|\rmC)$, $(\rmA\rmC|\rmB)$, and $(\rmB\rmC|\rmA)$.

A quantum state $\rho$  on $\caH_\rmT$ is (completely) separable if it can be expressed as follows:
\begin{align}
	\rho=\sum_l p_l \rho_l^{\rmA}\otimes \rho_l^{\rmB}\otimes \rho_l^{\rmC},
\end{align}
where $\rho_l^{\rmA}\in \caD(\caH_\rmA)$, $\rho_l^{\rmB}\in \caD(\caH_\rmB)$,   $\rho_l^{\rmC}\in \caD(\caH_\rmC)$, 
and $\{p_l\}_l$ forms a probability distribution. The state $\rho$ is $(\rmA\rmB|\rmC)$ separable 
[or separable with respect to the bipartition $(\rmA\rmB|\rmC)$] if $\rho$ can be expressed as follows:
\begin{align}
	\rho=\sum_l p_l \rho_l^{\rmA\rmB}\otimes \rho_l^{\rmC},
\end{align}
where $\rho_l^{\rmA\rmB}\in \caD(\caH_\rmA\otimes \caH_\rmB)$,  $\rho_l^{\rmC}\in \caD(\caH_\rmC)$, 
and $\{p_l\}_l$ forms a probability distribution.  In other words, $\rho$ is separable if $\rmA\rmB$ is regarded as a whole. In a similar way, we can define $(\rmA\rmC|\rmB)$ separable states and $(\rmB\rmC|\rmA)$ separable states. The state $\rho$ is biseparable if $\rho$ can be expressed as follows:
\begin{align}
	\rho=p_3 \rho_{(\rmA\rmB|\rmC)}+p_2 \rho_{(\rmA\rmC|\rmB)}+p_1 \rho_{(\rmB\rmC|\rmA)},
\end{align}
where the three quantum states $\rho_{(\rmA\rmB|\rmC)}$, $\rho_{(\rmA\rmC|\rmB)}$, and $\rho_{(\rmB\rmC|\rmA)}$ are $(\rmA\rmB|\rmC)$ separable, $(\rmA\rmC|\rmB)$ separable, and $(\rmB\rmC|\rmA)$ separable, respectively, and $\{p_l\}_{l=1}^3$ forms a probability distribution. 

\begin{figure}[t]
	\centering
	\includegraphics[width = 0.45\textwidth]{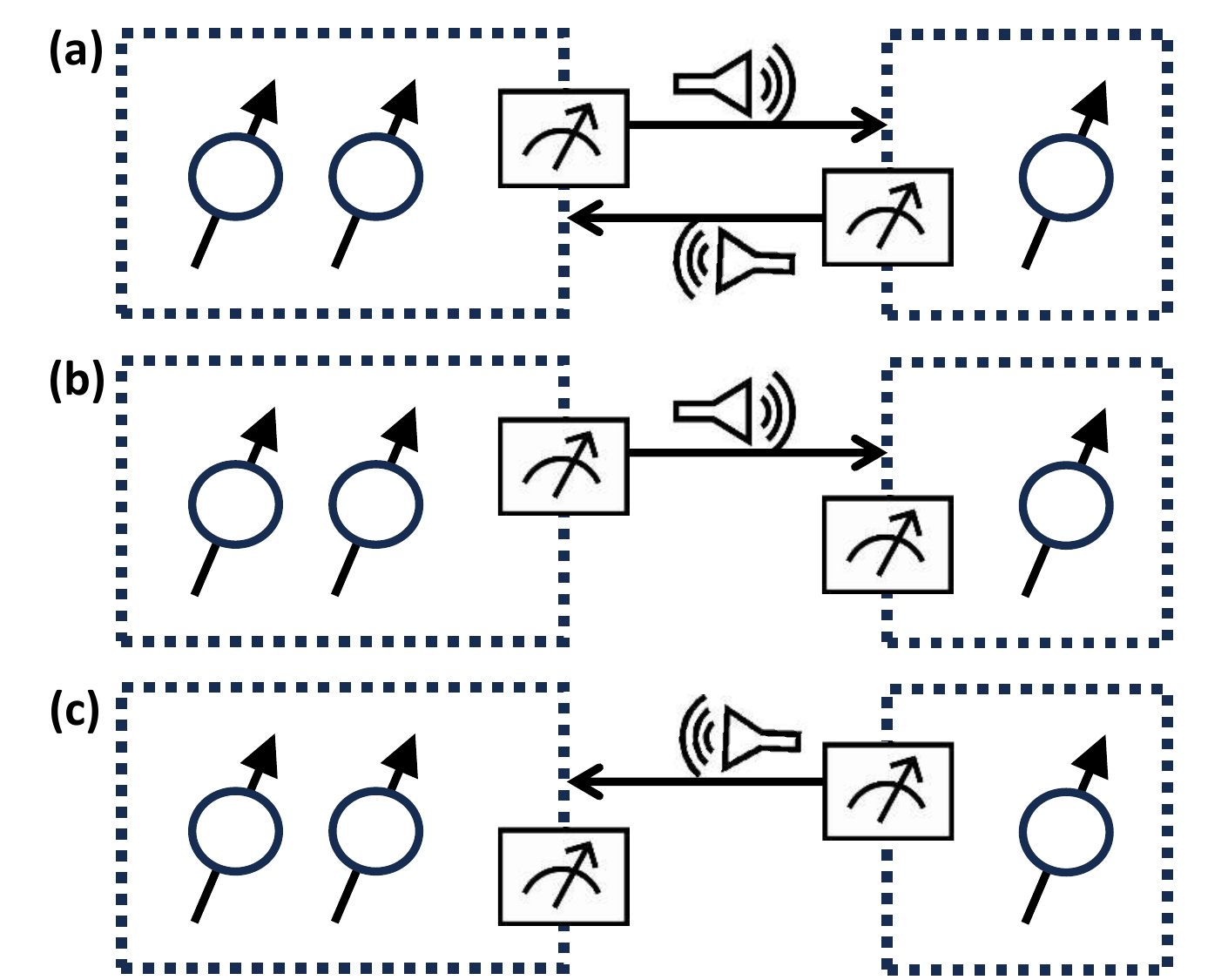}
	\caption{\label{fig:POVMRC}Three types of biseparable measurements on a tripartite quantum system: two-way adaptive (a), \mbox{$2+1$} adaptive (b), and $1+2$ adaptive (c).  Note that (a) contains (b) and (c) as  special cases.}
\end{figure}

Next, a positive operator on $\caH_\rmT$ is $(\rmA\rmB|\rmC)$ separable if it is proportional to a $(\rmA\rmB|\rmC)$ separable quantum state. A POVM $\scrM=\{M_j\}_j$ on $\caH_\rmT$ is  $(\rmA\rmB|\rmC)$ separable if every POVM element $M_j$ is $(\rmA\rmB|\rmC)$ separable. Generalization to the bipartitions $(\rmA\rmC|\rmB)$ and $(\rmB\rmC|\rmA)$ is immediate.  The POVM $\scrM$ is biseparable if it is a coarse graining of a POVM of the form 
\begin{align}
	\scrK= p_3\scrK_{(\rmA\rmB|\rmC)}\sqcup p_2\scrK_{(\rmA\rmC|\rmB)}\sqcup p_1 \scrK_{(\rmB\rmC|\rmA)},
\end{align}
where $p_1, p_2, p_3$ form a probability distribution and 
the  POVMs $\scrK_{(\rmA\rmB|\rmC)}$, $\scrK_{(\rmA\rmC|\rmB)}$, and $\scrK_{(\rmB\rmC|\rmA)}$ are $(\rmA\rmB|\rmC)$ separable, $(\rmA\rmC|\rmB)$ separable,  and $(\rmB\rmC|\rmA)$ separable, respectively.  A POVM is  genuinely collective if it is not biseparable as mentioned before.

Finally, we introduce three special types of biseparable POVMs  on $\caH_\rmT$. Given a bipartition, say $(\rmA\rmB|\rmC)$, a POVM is two-way adaptive if it can be realized by two-way communication between $\rmA\rmB$ and $\rmC$ (in addition to performing POVMs on $\rmA\rmB$ and $\rmC$, respectively); it is $2+1$ adaptive  if it can be realized by one-way communication from $\rmA\rmB$ to $\rmC$; it is $1+2$ adaptive if it can be realized by one-way communication from $\rmC$ to $\rmA\rmB$; see \fref{fig:POVMRC} for an illustration.

\section{Optimal quantum state estimation}
\label{sec:state_estimation}

\subsection{A simple estimation problem and estimation fidelity}
Here we reexamine an old estimation problem: A quantum device produces $N$ copies of a Haar-random pure state $\rho=|\psi\>\<\psi|$ on a $d$-dimensional Hilbert space $\caH$, and our task is to estimate the identity of $\rho$ based on quantum measurements 
\cite{MassP95,DerkBE98,LatoPT98,Haya98,BrusM99,GisiP99,Mass00,AcinLP00,BagaBM02,BagaMM05,HayaHH05}. The performance of an estimation protocol is quantified by the average fidelity. Suppose we perform a POVM $\scrM=\{M_j\}_j$ on $\rho^{\otimes N}$, then the probability of obtaining outcome $j$ reads $p_j = \tr(M_j \rho^{\otimes N})$. If we choose $\hat{\rho}_j$ as the estimator associated with outcome $j$, then the average estimation fidelity achieved by this protocol reads
\begin{equation}
\overline{F} = \sum_j \int_{\text{Haar}}d\psi \tr\bigl[(|\psi\>\<\psi|)^{\otimes N}M_j\bigr]\<\psi|\hat{\rho}_j|\psi\>,
\end{equation}
where the integral means taking the average over the ensemble of Haar-random pure states.

Let $\Sym_N(\caH)$ be the symmetric subspace in $\caH^{\otimes N}$ and $P_N$ the projector onto $\Sym_N(\caH)$.  Define
\begin{align}
\caQ(M_j) &:= (N+1)!\tr_{1,2,\ldots, N}[P_{N+1}(M_j\otimes \bbone)], \label{eq:Qmap}\\
F(\scrM)&:= \sum_j\frac{\|\caQ(M_j)\|}{d(d+1)\cdots (d+N)},\label{eq:EstimationFidDef}
\end{align}
where $\|\cdot\|$ is the spectral norm. Then $\overline{F}\leq F(\scrM)$, and the inequality is saturated if each  $\hat{\rho}_j$ is supported in the eigenspace of $\caQ(M_j)$ with the maximum eigenvalue by \rcite{Zhu22}. In view of this fact, $F(\scrM)$ is called the \emph{estimation fidelity} of $\scrM$ henceforth. The definition of the estimation fidelity is still applicable even if $\scrM$ is an incomplete POVM, which means $\sum_j M_j\leq \bbone^{\otimes N}$. 
If there is no restriction on the POVMs that can be performed, then the maximum  estimation fidelity  is $(N+1)/(N+d)$, and optimal POVMs can be constructed from complex projective $t$-designs with $t=N$ \cite{MassP95,BrusM99,HayaHH05,Zhu22}.

\subsection{Properties of the map $\caQ$ and estimation fidelity}

The basic properties of the map $\caQ$ and estimation fidelity $F(\scrM)$ are clarified in \rcite{Zhu22}.  Notably, 
$\|\caQ(M)\|$ and  $F(\scrM)$ are invariant under \emph{symmetric local unitary transformations}, which are associated with unitary operators of the form $U^{\otimes N}$ with $U\in \rmU(\caH)$. Here we introduce some additional results that are relevant to the following discussion. Note that the argument of $\caQ$ is not restricted to POVM elements and is not necessarily Hermitian. In analogy to $P_N$, let $P_N^\caA$ be the projector onto the antisymmetric subspace in $\caH^{\otimes N}$. Let $\caS_N$ be the symmetric group of the $N$ parties associated with the $N$ copies of $\rho$. For each $\sigma\in \caS_N$, let $\bbW_\sigma$ be the unitary operator on $\caH^{\otimes N}$ tied to the permutation $\sigma$. Then 
\begin{align}
	P_N = \frac{1}{N!}\sum_{\sigma\in \caS_N}\bbW_\sigma,\;
	P_N^\caA = \frac{1}{N!}\sum_{\sigma\in \caS_N}\sgn(\sigma)\bbW_\sigma,
\end{align}
where $\sgn(\sigma)=1$ when $\sigma$ is an even permutation and $\sgn(\sigma)=-1$ when $\sigma$ is an odd permutation.  The following lemma is a simple corollary of the definition of $\caQ$ in \eref{eq:Qmap}. 
\begin{lemma}\label{lem:Q}
	Suppose $M\in \caL(\caH^{\otimes N})$; then
	\begin{gather}
		\caQ(\bbW_\sigma M \bbW_\tau) = \caQ(M) \quad \forall \, \sigma,\tau \in \caS_N,\\
		\caQ[(P_{N-1}\otimes \bbone)M(P_{N-1}\otimes \bbone)] = \caQ(M). 
	\end{gather}
	If in addition $N\geq 3$, then 
	\begin{equation}
		\caQ\bigl[\bigl(P_{N-1}^\caA\otimes \bbone\bigr)M\bigl(P_{N-1}^\caA\otimes \bbone\bigr)\bigr] =0.
	\end{equation}  
\end{lemma}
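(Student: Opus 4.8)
The plan is to reduce all three claims to two elementary facts about the symmetric projector $P_{N+1}$ together with cyclicity of the partial trace. The first fact is that $P_{N+1}$ is invariant under left and right multiplication by $\bbW_g$ for every $g$ in the symmetric group $\caS_{N+1}$ of all $N+1$ copies: this is immediate from $P_{N+1}=\frac{1}{(N+1)!}\sum_{g\in\caS_{N+1}}\bbW_g$ together with the fact that $g\mapsto gh$ and $g\mapsto hg$ are bijections of $\caS_{N+1}$. The second fact is that $\tr_{1,\ldots,N}$ is cyclic with respect to operators supported only on the traced-out copies $1,\ldots,N$, i.e.\ $\tr_{1,\ldots,N}[(X\otimes\bbone)Y]=\tr_{1,\ldots,N}[Y(X\otimes\bbone)]$ for $X\in\caL(\caH^{\otimes N})$. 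Throughout I use the canonical embeddings $\caS_{N-1}\subset\caS_N\subset\caS_{N+1}$ that fix the trailing copies, under which $\bbW_\sigma$ on $\caH^{\otimes N}$ is identified with $\bbW_\sigma\otimes\bbone$ on $\caH^{\otimes(N+1)}$.

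For the first identity I insert $\bbW_\sigma M\bbW_\tau\otimes\bbone=(\bbW_\sigma\otimes\bbone)(M\otimes\bbone)(\bbW_\tau\otimes\bbone)$ into the definition of $\caQ$. The leftmost factor $\bbW_\sigma\otimes\bbone$ is absorbed by $P_{N+1}$ on its left; the rightmost factor $\bbW_\tau\otimes\bbone$ acts only on copies $1,\ldots,N$, so I move it to the front by the cyclicity fact and then absorb it into $P_{N+1}$ on the left as well. What survives is exactly $(N+1)!\,\tr_{1,\ldots,N}[P_{N+1}(M\otimes\bbone)]=\caQ(M)$, proving $\caQ(\bbW_\sigma M\bbW_\tau)=\caQ(M)$.

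The remaining two identities follow by linearity of $\caQ$ from the first. Writing $P_{N-1}=\frac{1}{(N-1)!}\sum_{\sigma\in\caS_{N-1}}\bbW_\sigma$ (with $\caS_{N-1}$ permuting the first $N-1$ copies), the operator $(P_{N-1}\otimes\bbone)M(P_{N-1}\otimes\bbone)$ is an average of terms $\bbW_\sigma M\bbW_\tau$ with $\sigma,\tau\in\caS_{N-1}\subset\caS_N$, and applying the first identity to each term yields $\caQ(M)$. For the antisymmetric version, $P_{N-1}^\caA=\frac{1}{(N-1)!}\sum_{\sigma\in\caS_{N-1}}\sgn(\sigma)\bbW_\sigma$ produces the same average weighted by $\sgn(\sigma)\sgn(\tau)$, so $\caQ$ of it equals $\bigl(\frac{1}{(N-1)!}\sum_{\sigma\in\caS_{N-1}}\sgn(\sigma)\bigr)^2\caQ(M)$; the scalar prefactor vanishes because a symmetric group of degree $N-1\ge 2$ contains equally many even and odd permutations, and this is precisely where the hypothesis $N\ge 3$ enters.

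I do not expect a genuine obstacle: the only points demanding care are verifying in the cyclicity step that the operator pushed through $\tr_{1,\ldots,N}$ is indeed supported on the traced-out subsystems (it is $\bbW_\tau\otimes\bbone$), and keeping the three nested symmetric groups straight so that the symbol $\bbW_\sigma$ is unambiguous in each Hilbert space.
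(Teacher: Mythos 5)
Your proof is correct and is essentially the argument the paper intends: the paper states \lref{lem:Q} without proof, calling it a simple corollary of the definition of $\caQ$, and your two ingredients --- absorption of $\bbW_g$ by $P_{N+1}$ on either side, plus cyclicity of $\tr_{1,\ldots,N}$ for operators supported on the traced-out copies --- together with the expansion of $P_{N-1}$ and $P_{N-1}^{\caA}$ as signed averages over $\caS_{N-1}$ are exactly what makes it so. The vanishing of $\sum_{\sigma\in\caS_{N-1}}\sgn(\sigma)$ for $N\geq 3$ is the right place to use that hypothesis.
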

Given any POVM $\scrM=\{M_j\}_j$ on $\caH^{\otimes N}$,  define
\begin{equation}\label{eq:POVMsymAsym}
\begin{aligned}
	\scrM_\caS&:=\{(P_{N-1}\otimes \bbone) M_j (P_{N-1}\otimes \bbone)\}_j,\\ \scrM_\caA&:=\bigl\{\bigl(P_{N-1}^\caA \otimes \bbone\bigr) M_j \bigl(P_{N-1}^\caA \otimes \bbone\bigr)\bigr\}_j.
\end{aligned}
\end{equation} 
Then $\scrM_\caS$ is a POVM on $\Sym_{N-1}(\caH)\otimes \caH$, and $\scrM_\caA$ is a POVM on $\supp(P_{N-1}^\caA)\otimes \caH$ (zero POVM elements can be deleted by default).
The following lemma is a simple corollary of \lref{lem:Q}. 
\begin{lemma}\label{lem:POVMsym}
	Suppose $\scrM$ is a POVM on $\caH^{\otimes N}$. Then 
	\begin{align}
		F(\scrM_\caS)=F(\scrM),\quad F(\scrM_\caA)=0. \label{eq:POVMsym}
	\end{align}
\end{lemma}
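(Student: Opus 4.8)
The plan is to obtain both equalities as immediate, term-by-term consequences of \lref{lem:Q}. Recall from \eref{eq:EstimationFidDef} that the estimation fidelity of any (possibly incomplete) POVM is the sum of the spectral norms $\|\caQ(\cdot)\|$ of its elements, divided by the fixed normalization $d(d+1)\cdots(d+N)$. Since $\caQ$ and $F$ are defined by regarding each POVM element as an operator on the full space $\caH^{\otimes N}$ — with $\scrM_\caS$ merely supported on $\Sym_{N-1}(\caH)\otimes\caH$ and $\scrM_\caA$ on $\supp(P_{N-1}^\caA)\otimes\caH$ — this normalization is identical for $\scrM$, $\scrM_\caS$, and $\scrM_\caA$. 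It therefore suffices to compare the operators $\caQ(M_j)$, $\caQ[(P_{N-1}\otimes\bbone)M_j(P_{N-1}\otimes\bbone)]$, and $\caQ[(P_{N-1}^\caA\otimes\bbone)M_j(P_{N-1}^\caA\otimes\bbone)]$ index by index.

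For the first equality, the POVM elements of $\scrM_\caS$ defined in \eref{eq:POVMsymAsym} are exactly $(P_{N-1}\otimes\bbone)M_j(P_{N-1}\otimes\bbone)$; the second identity of \lref{lem:Q} gives $\caQ[(P_{N-1}\otimes\bbone)M_j(P_{N-1}\otimes\bbone)]=\caQ(M_j)$ for every $j$, so the spectral norms coincide term by term and $F(\scrM_\caS)=F(\scrM)$ after summing. For the second equality, the POVM elements of $\scrM_\caA$ are $(P_{N-1}^\caA\otimes\bbone)M_j(P_{N-1}^\caA\otimes\bbone)$; the third identity of \lref{lem:Q} — this is where the hypothesis $N\geq 3$ enters, and note that for $N\leq 2$ the operator $P_{N-1}^\caA$ degenerates to the identity on $\caH^{\otimes(N-1)}$, so the claim would no longer hold — yields $\caQ[(P_{N-1}^\caA\otimes\bbone)M_j(P_{N-1}^\caA\otimes\bbone)]=0$ for every $j$. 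Hence every summand in $F(\scrM_\caA)$ vanishes and $F(\scrM_\caA)=0$.

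There is no real obstacle here: the statement is a direct corollary of \lref{lem:Q}, and the only points worth recording are the routine bookkeeping facts that (i) $\scrM_\caS$ and $\scrM_\caA$ are genuine POVMs on the stated subspaces, which follows by conjugating the completeness relation $\sum_j M_j=\bbone^{\otimes N}$ by the projectors $P_{N-1}\otimes\bbone$ and $P_{N-1}^\caA\otimes\bbone$ respectively (positivity is preserved under such conjugation, and the conjugated sum is the identity on the respective subspace); and (ii) any POVM elements that become zero contribute $\|\caQ(0)\|=0$ and may be discarded without altering the sums. All of the substance is already contained in \lref{lem:Q}.
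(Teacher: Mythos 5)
Your proof is correct and follows exactly the route the paper intends: the paper gives no written proof, stating only that \lref{lem:POVMsym} is a simple corollary of \lref{lem:Q}, and your argument is precisely that corollary spelled out element by element. Your side remark that the second equality implicitly needs $N\geq 3$ (since $P_{N-1}^{\caA}$ degenerates for $N\leq 2$) is a fair and accurate observation consistent with the paper's setting.
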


\section{Optimal estimation of three parallel spins with genuine collective measurements}
\label{sec:CollectiveOpt}

From now on we turn to the estimation of three parallel spins, which means $d=2$, $N=3$, and  $\caH$ is a single-qubit Hilbert space. In the following discussion, we label the three copies of $\caH$ by $\rmA$, $\rmB$, and  $\rmC$, respectively. 

To benchmark the performance of restricted collective measurements, we first reexamine an optimal estimation strategy when there is no restriction on the measurements that can be performed. In this case,  the maximum  estimation fidelity  is $4/5$ \cite{MassP95,LatoPT98}. Consider the three Pauli operators
\begin{align}
	X=\begin{pmatrix}
		0 &1\\ 
		1 &0
	\end{pmatrix}, \;\; Y=\begin{pmatrix}
		0 &-\rmi\\ 
		\rmi &0
	\end{pmatrix}, \;\; Z=\begin{pmatrix}
		1 &0\\ 
		0 &-1
	\end{pmatrix},
\end{align}
and let $|\psi_j\>$ for $j=1,2,...,6$ 
be the six eigenstates, which form a regular octahedron when represented on the Bloch sphere. Then  an optimal POVM $\scrE$ can be constructed from the following seven POVM elements \cite{LatoPT98}:
\begin{equation}\label{eq:POVMcol}
\begin{aligned}
	E_j &:= \frac{2}{3}|\psi_j\>\<\psi_j|^{\otimes 3},\quad j = 1,2,\ldots,6,\\
	E_7 &:= \bbone - \sum_{j=1}^6 E_j = \bbone - P_3,
\end{aligned}
\end{equation} 
where $P_3$ is the projector onto $\Sym_3(\caH)$. Although this POVM was constructed more than 20 years ago, its intriguing properties have not been fully appreciated.

Let $\Pi=P_2^\caA\otimes \bbone$ and let $\bbW$ be the unitary operator on $\caH^{\otimes 3}$ that is associated with a cyclic permutation. Then $\bbW^2=\bbW^\dag$ and the POVM element $E_7$ can be expressed as follows:
\begin{align}
E_7=\frac{2}{3}\bigl(\Pi +\bbW \Pi\bbW^\dag + \bbW^\dag \Pi\bbW \bigr),
\end{align}
which means $E_7$ is biseparable. So all POVM elements in the optimal POVM $\scrE$ are biseparable. Surprisingly, however, $\scrE$ is not biseparable as shown in the companion paper \cite{ZhouYYH23}. Alternatively, this conclusion also follows from \thref{thm:bisep} below. 

Collective measurements are in general not easy to realize. If we can only perform local measurements on individual copies, then the maximum estimation fidelity is $(3 +\sqrt{3}\lsp)/6$, and the maximum can be attained when Alice, Bob, and Charlie perform Pauli $X$, $Y$, and $Z$ measurements, respectively~\cite{BagaMM05}. Note that the measurement bases of the three parties are mutually unbiased. Recall that two bases $\{|\psi_j\>\}_{j=0}^{d-1}$ and $\{|\varphi_k\>\}_{k=0}^{d-1}$ in $\caH$ are mutually unbiased if $|\<\psi_j|\varphi_k\>|^2=1/d$ for all $j,k$ \cite{Ivan81,WootF89,DurtEBZ10}. Such bases will also be useful for constructing optimal biseparable measurements (in a subtle way), including optimal $2+1$ adaptive measurements, as we shall see later.

\section{Optimal estimation of three parallel spins with restricted collective measurements}
\label{sec:RestrictedOpt}

Although  the maximum estimation fidelity of general collective measurements was clarified a long time ago, the performance of restricted collective measurements is still poorly understood. To fill this gap, here we shall determine  the maximum estimation fidelities of biseparable measurements, $2+1$ adaptive measurements, and $1+2$ adaptive measurements, respectively, in  the estimation of three parallel spins, and construct optimal estimation strategies explicitly.  It turns out that  $2+1$ adaptive measurements can achieve the maximum estimation fidelity of biseparable measurements. As in \sref{sec:CollectiveOpt}, here $\caH$ is a single-qubit Hilbert space.

By symmetry, the maximum estimation fidelity of
$(\rmA\rmB|\rmC)$ separable POVMs on $\caH^{\otimes 3}$ is  identical to the counterpart of $(\rmA\rmC|\rmB)$ separable POVMs and the counterpart of $(\rmB\rmC|\rmA)$ separable POVMs. Moreover, this maximum estimation fidelity is also the maximum estimation fidelity of general biseparable POVMs, given that coarse graining cannot increase the estimation fidelity \cite{Zhu22}. In addition, it suffices to consider rank-1 POVMs to determine the maximum estimation fidelity. 

Furthermore, the maximum estimation fidelity of $(\rmA\rmB|\rmC)$ separable POVMs on $\caH^{\otimes 3}$ is identical to  the maximum estimation fidelity of separable POVMs on $\Sym_2(\caH)\otimes \caH$ thanks to \lref{lem:POVMsym}. If $\scrM$ is an optimal
$(\rmA\rmB|\rmC)$ separable POVM on $\caH^{\otimes 3}$, then $\scrM_\caS$ defined in \eref{eq:POVMsymAsym} is an optimal separable POVM on  $\Sym_2(\caH)\otimes \caH$. Conversely,  if $\scrM$ is an optimal separable POVM   on  $\Sym_2(\caH)\otimes \caH$, then  an
 optimal biseparable POVM on $\caH^{\otimes 3}$ can be constructed  as follows:
\begin{align} 
\scrM\cup \bigl\{P_2^\caA\otimes \bbone\bigr\}. 
\end{align}
Similar remarks apply to $2+1$ adaptive POVMs and $1+2$ adaptive POVMs.

\subsection{Biseparable measurements}

According to the previous discussion, to determine the maximum estimation fidelity of biseparable POVMs on $\caH^{\otimes 3}$, it suffices to consider separable rank-1 POVMs on $\Sym_2(\caH)\otimes \caH$. Suppose $\scrM=\{M_j\}_j$ is such a POVM, then each POVM element $M_j$ has the form
\begin{equation}\label{eq:povm}
	M_j=w_j|\Psi_j\>\<\Psi_j|,\quad |\Psi_j\>= |\Phi_j\>\otimes|\varphi_j\>,
\end{equation}
where $|\Phi_j\>\in \Sym_2(\caH)$ and  $|\varphi_j\>\in \caH$. In addition,
\begin{equation}\label{eq:POVMbs}
	\begin{gathered}
		w_j > 0, \quad \sum_j w_j =6,\\
		\sum_j w_j|\Phi_j\>\<\Phi_j|\otimes|\varphi_j\>\<\varphi_j| = P_2\otimes \bbone,
	\end{gathered}
\end{equation}
where $P_2$ is the projector onto $\Sym_2(\caH)$. 
In conjunction with \eref{eq:EstimationFidDef}, the maximum estimation fidelity of biseparable POVMs can be expressed as follows:
\begin{equation}\label{eq:BisepOpt}
\max_{\scrM}  \frac{1}{120}\sum_j w_j \|\caQ(|\Phi_j\>\<\Phi_j|\otimes|\varphi_j\>\<\varphi_j|)\|,
\end{equation}
where the maximization is subjected to the constraints in \eref{eq:POVMbs}.

By virtue of  \eqsref{eq:POVMbs}{eq:BisepOpt} 
we can derive an analytical formula for the maximum estimation fidelity and clarify the structure of optimal separable  POVMs on  $\Sym_2(\caH)\otimes \caH$
as shown in \lref{lem:bisep} below. On this basis we can further determine the maximum estimation fidelity of biseparable POVMs and clarify the structure of optimal POVMs as shown in \thref{thm:bisep} below. See \aref{app:proofofthm2} for proofs of \lref{lem:bisep} and \thref{thm:bisep}. 

Given a positive integer $t$, recall that a set of unitaries $\caU = \{U_j\}_j$ on $\caH$ is a (unitary) $t$-design \cite{DankCEL09,GrosAE07} if the following equation holds
\begin{equation}
	\frac{1}{|\caU|}\sum_{U_j\in\caU}U_j^{\otimes t}O U_j^{\dag\otimes t} =  \int_{\mathrm{Haar}} dU U^{\otimes t}O U^{\dag\otimes t}
\end{equation}
for all $O\in \caL\bigl(\caH^{\otimes t}\bigr)$. For example, the Clifford group forms a 3-design \cite{Zhu17MC,Webb16}. 
Define
\begin{equation}\label{eq:tPsi}
	\begin{aligned}
		|\tPhi\> &:= \frac{\sqrt{8 + 3\sqrt{7}}}{4}|00\> + \frac{\sqrt{8 - 3\sqrt{7}}}{4}|11\>,\\
		|\tPsi\> &:= |\tPhi\>\otimes|+\>,
	\end{aligned}
\end{equation}
where $|\pm\> = (|0\> \pm |1\>)/\sqrt{2}$ are the eigenstates of $X$ with eigenvalues $\pm1$. Note that  the state $|\tPhi\>$ has concurrence $1/8$ \cite{HillW97}, and $|\pm\>$ are mutually unbiased with respect to the Schmidt basis of $|\tPhi\>$ for each party. In addition, $|\tPhi\>\otimes|+\>$ and $|\tPhi\>\otimes|-\>$ are equivalent under the symmetric local unitary transformation $Z^{\otimes 3}$. 

\begin{proposition}\label{pro:bisepSymPOVM}
	Suppose  $\{U_j\}_{j=1}^m$ is a unitary 2-design on $\caH$, then  $\bigl\{3U_j^{\otimes 2}|\tPhi\>\<\tPhi|U_j^{\dag\otimes 2}/m\bigr\}_{j=1}^m$ is a POVM on $\Sym_2(\caH)$. If in addition $\{U_j\}_{j=1}^m$ is a  3-design, then 
	$\bigl\{6U_j^{\otimes 3}|\tPsi\>\<\tPsi|U_j^{\dag\otimes 3}/m\bigr\}_{j=1}^m$ is a POVM on $\Sym_2(\caH)\otimes \caH$. 
\end{proposition}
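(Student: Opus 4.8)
The plan is to verify, for each of the two families, that the operators are nonzero and positive and that they sum to the relevant identity operator; positivity is immediate, and the only substantive point is the summation, which I would reduce—via the defining property of unitary $t$-designs—to evaluating a Haar average of a rank-one operator under conjugation by $U^{\otimes t}$, and then pin that average down with Schur's lemma.

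First I would treat the $2$-design case. Since $|\tPhi\>$ is a unit vector in $\Sym_2(\caH)$, each $3U_j^{\otimes 2}|\tPhi\>\<\tPhi|U_j^{\dag\otimes 2}/m$ is a nonzero positive operator supported on $U_j^{\otimes 2}\Sym_2(\caH)=\Sym_2(\caH)$, hence a legitimate POVM element on $\Sym_2(\caH)$. By the $2$-design property their sum equals $3\int_{\mathrm{Haar}}dU\,U^{\otimes 2}|\tPhi\>\<\tPhi|U^{\dag\otimes 2}$, an operator supported on $\Sym_2(\caH)$ that commutes with every $U^{\otimes 2}$. As $\Sym_2(\caH)$ carries an irreducible representation of $\mathrm{SU}(2)$, Schur's lemma forces this integral to be a multiple of $P_2$, and comparing traces (with $\tr|\tPhi\>\<\tPhi|=1$ and $\tr P_2=3$) fixes the multiple to $1/3$. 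Hence the sum is $P_2$, the identity on $\Sym_2(\caH)$.

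For the $3$-design case the same reasoning shows each $6U_j^{\otimes 3}|\tPsi\>\<\tPsi|U_j^{\dag\otimes 3}/m$ is a nonzero positive operator on $\Sym_2(\caH)\otimes\caH$ (since $|\tPsi\>=|\tPhi\>\otimes|+\>$ lies there and $U^{\otimes 3}$ preserves that subspace), and the $3$-design property makes their sum equal to $6\int_{\mathrm{Haar}}dU\,U^{\otimes 3}|\tPsi\>\<\tPsi|U^{\dag\otimes 3}$, which is supported on $\Sym_2(\caH)\otimes\caH$ and commutes with every $U^{\otimes 3}$. Under $\mathrm{SU}(2)$ the space $\Sym_2(\caH)\otimes\caH$ splits into a spin-$3/2$ component, which is exactly $\Sym_3(\caH)$ (dimension $4$), and a spin-$1/2$ component (dimension $2$), each occurring with multiplicity one; so by Schur's lemma the integral equals $aP_3+b\Pi$, where $P_3$ projects onto $\Sym_3(\caH)$ and $\Pi:=(P_2\otimes\bbone)-P_3$ projects onto the spin-$1/2$ component. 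Taking traces against $P_3$ and against $\Pi$—using that $P_3$ commutes with $U^{\otimes 3}$, that $|\tPsi\>$ lies in $\Sym_2(\caH)\otimes\caH$, and $\tr P_3=4$, $\tr\Pi=2$—gives $a=\<\tPsi|P_3|\tPsi\>/4$ and $b=(1-\<\tPsi|P_3|\tPsi\>)/2$, so the whole claim reduces to the single number $\<\tPsi|P_3|\tPsi\>$.

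The last step is the computation $\<\tPsi|P_3|\tPsi\>=2/3$. Writing $P_3=\tfrac16\sum_{\sigma\in\caS_3}\bbW_\sigma$ and letting $\lambda_0,\lambda_1$ denote the Schmidt coefficients of $|\tPhi\>=\lambda_0|00\>+\lambda_1|11\>$, I note that the swap of the first two copies fixes $|\tPsi\>$, so $\<\tPsi|\bbW_\sigma|\tPsi\>=1$ for $\sigma\in\{e,(\rmA\rmB)\}$, while a short direct calculation gives $\<\tPsi|\bbW_\sigma|\tPsi\>=\sum_i\lambda_i^2|\<i|+\>|^2=1/2$ for each of the remaining four permutations, using $\sum_i\lambda_i^2=1$ together with the mutual unbiasedness $|\<i|+\>|^2=1/2$; hence $\<\tPsi|P_3|\tPsi\>=\tfrac16(1+1+4\cdot\tfrac12)=\tfrac23$. (Alternatively, one may expand $|\tPsi\>$ in the standard orthonormal basis of $\Sym_3(\caH)$ and read off the same value.) Substituting back yields $a=b=1/6$, so the Haar average equals $\tfrac16(P_2\otimes\bbone)$ and the sum of the POVM elements equals $P_2\otimes\bbone$, the identity on $\Sym_2(\caH)\otimes\caH$. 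I expect the only genuinely essential ingredient—and the place where the choice $|\varphi\>=|+\>$ rather than an arbitrary single-qubit state matters—to be precisely this overlap identity: because $|\tPhi\>$ is not maximally entangled, for a generic $|\varphi\>$ the four overlaps $\sum_i\lambda_i^2|\<i|\varphi\>|^2$ would not all equal $1/2$ and the sum would fail to be proportional to $P_2\otimes\bbone$; so the crux is the bookkeeping around this identity and the subspace supports, not any hard estimate.
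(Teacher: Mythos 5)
Your proposal is correct and takes essentially the same route as the paper: the paper deduces Proposition~\ref{pro:bisepSymPOVM} from the twirling identity \eref{eq:IPsi} of \lref{lem:q_abc}, identifying the commutant of $U^{\otimes 3}$ on $\Sym_2(\caH)\otimes\caH$ via Schur--Weyl duality (so the average is a combination of $P_3$ and $P_2\otimes\bbone-P_3$) where you use the Clebsch--Gordan decomposition $1\otimes\tfrac12=\tfrac32\oplus\tfrac12$ and Schur's lemma, and it fixes the coefficients by the same overlap computation, $\<\tPsi|P_3|\tPsi\>=2/3$ (equivalently $a^2=c^2$ in the paper's parametrization). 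Your closing observation that mutual unbiasedness of $|+\>$ with the Schmidt basis of $|\tPhi\>$ is exactly what makes the average proportional to $P_2\otimes\bbone$ is also consistent with the paper's remarks following \eref{eq:IPsi}.
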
 

\Pref{pro:bisepSymPOVM}  follows from a similar reasoning that is used to prove \eref{eq:IPsi} in \aref{app:proofofthm2}. It offers a simple way for constructing separable POVMs on $\Sym_2(\caH)\otimes \caH$. Surprisingly, all such POVMs are optimal for estimating three parallel spins among separable POVMs on $\Sym_2(\caH)\otimes \caH$. 

\begin{lemma}\label{lem:bisep}
	Suppose $\scrM=\{M_j\}_j$ is a separable POVM on $\Sym_2(\caH)\otimes \caH$. Then 
	\begin{equation}\label{eq:bisepSymUB}
		F(\scrM)\leq F_\bs:=\frac{1}{2} + \frac{\sqrt{22}}{16},
	\end{equation}
	and the upper bound is saturated iff each $M_j/\tr(M_j)$  is a pure  state that is equivalent to $|\tPsi\>\<\tPsi|$ under a symmetric local unitary transformation.
\end{lemma}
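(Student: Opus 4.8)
\emph{Reducing the fidelity to a single linear constraint.} By the reduction already in place it suffices to bound $F(\scrM)$ for a separable rank-1 POVM $\scrM=\{M_j\}_j$ on $\Sym_2(\caH)\otimes\caH$, with $M_j=w_j|\Psi_j\>\<\Psi_j|$, $|\Psi_j\>=|\Phi_j\>\otimes|\varphi_j\>$, $|\Phi_j\>\in\Sym_2(\caH)$. I would start from the identity $\<\psi|\caQ(|\Psi\>\<\Psi|)|\psi\>=4!\,\|P_4(|\Psi\>\otimes|\psi\>)\|^2$ for $|\Psi\>\in\Sym_2(\caH)\otimes\caH\subset\caH^{\otimes3}$, which exhibits $\caQ(|\Psi\>\<\Psi|)$ as a positive single-qubit operator. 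Taking the trace and using $\tr_4 P_4=\tfrac54 P_3$ gives $\tr\caQ(|\Psi\>\<\Psi|)=30\,\<\Psi|P_3|\Psi\>$, hence $\|\caQ(|\Psi\>\<\Psi|)\|=15\,\<\Psi|P_3|\Psi\>+s(\Psi)$, where $s(\Psi)$ is half the spectral spread of $\caQ(|\Psi\>\<\Psi|)$ (the length of its Bloch vector). Since $\sum_j w_j|\Psi_j\>\<\Psi_j|=P_2\otimes\bbone$ we get $\sum_j w_j\<\Psi_j|P_3|\Psi_j\>=\tr[P_3(P_2\otimes\bbone)]=\tr P_3=4$, so $F(\scrM)=\tfrac12+\tfrac1{120}\sum_j w_j\,s(\Psi_j)$. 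Thus the lemma becomes the statement $\sum_j w_j\,s(\Psi_j)\le120(F_\bs-\tfrac12)=\tfrac{15\sqrt{22}}{2}$, together with its equality case.

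\emph{A closed form for $s(\Psi)$.} Because the whole spectrum of $\caQ(|\Psi\>\<\Psi|)$ is invariant under symmetric local unitaries, I would bring $|\Phi\>$ into Schmidt form $\cos\theta|00\>+\sin\theta|11\>$, use the residual $Z^{\otimes3}$ symmetry, and note that a nonzero relative phase in $|\varphi\>$ only shrinks the off-diagonal element of $\caQ(|\Psi\>\<\Psi|)$, which reduces $|\varphi\>$ to $\sqrt{(1+z)/2}\,|0\>+\sqrt{(1-z)/2}\,|1\>$. A direct computation in the Dicke basis of $\Sym_4(\caH)$ then gives the closed form $s(\Psi)^2=(5w+4z)^2+(1-z^2)\bigl(3+\sqrt{1-w^2}\,\bigr)^2$ with $w=\cos^2\theta-\sin^2\theta\in[0,1]$, $z\in[-1,1]$; in particular $|\tPsi\>$ corresponds to $(w,z)=(3\sqrt7/8,\,0)$ and has $s(\tPsi)=5\sqrt{22}/4$, consistent with the target value $\tfrac{15\sqrt{22}}{2}=6\,s(\tPsi)$. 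It is worth emphasizing that the pointwise maximum of $s(\Psi)$ over product states is much larger than $s(\tPsi)$ (e.g.\ $s=9$ at $|000\>$), so the POVM normalization constraint is essential and must be used beyond just $\sum_j w_j=6$.

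\emph{The upper bound via a dual certificate.} The crux is to exhibit a Hermitian operator $N$ on $\Sym_2(\caH)\otimes\caH$ such that (i) $\<\Phi|\<\varphi|N|\Phi\>|\varphi\>\ge s(\Phi\otimes\varphi)$ for all unit $|\Phi\>\in\Sym_2(\caH)$ and $|\varphi\>\in\caH$, and (ii) $\tr[N(P_2\otimes\bbone)]=\tfrac{15\sqrt{22}}{2}$. Given such $N$, summing (i) weighted by $w_j$ and invoking the normalization yields $\sum_j w_j\,s(\Psi_j)\le\sum_j w_j\<\Psi_j|N|\Psi_j\>=\tr[N(P_2\otimes\bbone)]=\tfrac{15\sqrt{22}}{2}$, hence $F(\scrM)\le F_\bs$. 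I would obtain $N$ from the (linearly constrained) dual of the maximization in the first step: its form is pinned down by requiring equality in (i) at $|\tPsi\>$ together with the first-order stationarity conditions there, plus the trace condition (ii); strong duality (compactness of the set of product states, feasibility via a unitary $3$-design) guarantees that a certificate with this trace exists. The inequality (i) itself I would prove by squaring, so that it becomes $\bigl(\<\Phi|\<\varphi|N|\Phi\>|\varphi\>\bigr)^2-s(\Phi\otimes\varphi)^2\ge0$; by the previous step this is a polynomial inequality in the two real parameters $(w,z)\in[0,1]\times[-1,1]$ (with one extra variable, entering the bound monotonically, if the residual phase of $|\varphi\>$ is retained), to be settled by elementary calculus/boundary analysis. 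This global positivity verification — finding the right $N$ and checking that it dominates $s$ everywhere while having exactly the right trace — is the technical heart of the proof and the step I expect to be the main obstacle; the rest is bookkeeping.

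\emph{The equality case.} If $F(\scrM)=F_\bs$, then the pointwise inequality (i) is saturated for every $j$ with $w_j>0$; the equality analysis of the polynomial inequality in the previous step forces $(w_j,z_j)$, and the residual phase, to their optimal values, i.e.\ each $M_j/\tr(M_j)$ is a symmetric-local-unitary image of $|\tPsi\>\<\tPsi|$. Conversely, by \pref{pro:bisepSymPOVM} the $3$-design construction produces separable POVMs on $\Sym_2(\caH)\otimes\caH$ all of whose elements are of this form and which automatically satisfy the normalization constraint, so $F_\bs$ is attained; any two such POVMs yield the same fidelity because $\|\caQ(\cdot)\|$ is invariant under symmetric local unitaries.
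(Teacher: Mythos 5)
Your setup is correct and, after unwinding the notation, is essentially the paper's own: the identity $\|\caQ(|\Psi\>\<\Psi|)\|=15\<\Psi|P_3|\Psi\>+s(\Psi)$ together with $\sum_j w_j\<\Psi_j|P_3|\Psi_j\>=\tr P_3=4$ is exactly equivalent to the constraint $\sum_j w_j a_j^2=\sum_j w_j c_j^2$ used in the paper, your closed form for $s$ agrees with $\eta(a,b,c,\phi)-(15a^2+10b^2+5c^2)$ in a different canonical gauge (Schmidt form of $|\Phi_j\>$ rather than of $|\varphi_j\>$), and $s(\tPsi)=5\sqrt{22}/4$ checks out. The genuine gap is that the entire analytic content of the lemma is deferred: you never exhibit the certificate $N$, never prove the pointwise domination $\<\Psi|N|\Psi\>\geq s(\Psi)$ over product states, and you yourself flag this as ``the main obstacle.'' Moreover, the generality of your dual is illusory: since $s$ and the set of product states in $\Sym_2(\caH)\otimes\caH$ are both invariant under $U^{\otimes 3}$, any valid certificate can be averaged over the unitary group without changing $\tr[N(P_2\otimes\bbone)]$ or losing domination, so $N$ may be taken of the form $\lambda P_3+\mu(P_2\otimes\bbone-P_3)$ by Schur--Weyl duality. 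Its expectation value is then an affine function of $y=a^2-c^2$ alone, and condition (i) collapses to precisely the paper's Lemma~A2, $f(x,y)\leq \frac{5}{2}\sqrt{11/2}+8\sqrt{2/11}\,y$ with equality iff $(x,y)=(9/16,0)$. That inequality is not bookkeeping: the paper's proof requires a careful case split (the relevant derivative changes sign structure at $x=88/137$) and an exact determination of the equality set. Strong duality tells you an optimal certificate exists, but it does not tell you that its value is $\frac{15\sqrt{22}}{2}$; that number only emerges from actually proving the scalar inequality, which is the step missing here.

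A second, smaller gap concerns the ``iff.'' Your equality analysis presupposes both the unproven certificate and a rank-1 POVM. The lemma as stated characterizes saturation among \emph{all} separable POVMs, so you must also rule out saturation by POVMs with a higher-rank element; the paper does this by noting that such a POVM is a coarse graining of a rank-1 refinement at least one element of which lies off the orbit of $|\tPsi\>$, whence $F(\scrM)\leq F(\tscrM)<F_\bs$. Finally, your reduction of the residual phase of $|\varphi\>$ (``only shrinks the off-diagonal element'') is the analogue of the paper's $\eta(a,b,c,\phi)\leq\eta(a,b,c)$ and is fine for the upper bound, but for the equality case you must also record that saturation forces the phase to its optimal value, as the paper does when concluding $\phi_j=0$ for all $j$.
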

Note that an optimal separable POVM $\scrM$ on $\Sym_2(\caH)\otimes \caH$ is automatically rank-1. Moreover, all normalized POVM elements of $\scrM$ are equivalent to each other under symmetric local unitary transformations and thus have the same entanglement structure. Notably, $\tr_\rmC(M_j)/\tr(M_j)$ always has concurrence $1/8$; in addition, the eigenbasis of $\tr_{\rmA\rmB}(M_j)$ is  mutually unbiased with the Schmidt basis of $\tr_\rmC(M_j)$ for each party. Now the appearance of mutually unbiased bases is more subtle compared with the optimal strategies based on local projective measurements \cite{BagaMM05}. 
 Thanks to  \lref{lem:POVMsym}, \eref{eq:bisepSymUB} still holds if instead $\scrM$ is a POVM on $\caH^{\otimes 3}$ that is $(\rmA\rmB|\rmC)$ separable. 
Also,  by symmetry the same conclusion holds if $\scrM$ is  $(\rmA\rmC|\rmB)$ separable or $(\rmB\rmC|\rmA)$ separable.

\begin{theorem}\label{thm:bisep}
	Suppose $\scrM=\{M_j\}_j$ is a biseparable POVM on $\caH^{\otimes 3}$; then  $F(\scrM)\leq F_\bs$.	If in addition $\scrM$ is $(\rmA \rmB|\rmC)$ separable, then the maximum estimation fidelity $F_\bs$ can be attained iff $(P_2\otimes \bbone)M_j(P_2\otimes \bbone)$ for each $j$ is proportional to a quantum state that is equivalent to $|\tPsi\>\<\tPsi|$ under a symmetric local unitary transformation.
\end{theorem}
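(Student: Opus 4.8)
The plan is to deduce \thref{thm:bisep} from \lref{lem:bisep} by means of the symmetrization map $\scrM\mapsto\scrM_\caS$ of \eref{eq:POVMsymAsym}, the identity $F(\scrM_\caS)=F(\scrM)$ from \lref{lem:POVMsym}, and the fact \cite{Zhu22} that coarse graining cannot increase the estimation fidelity. For the universal bound, I would start from the definition of a biseparable POVM: $\scrM$ is a coarse graining of some $\scrK=\bigsqcup_{\caP,\,|\caP|=2}p_\caP\scrK_\caP$ with each $\scrK_\caP$ separable with respect to the bipartition $\caP$, so $F(\scrM)\leq F(\scrK)$. Since $\caQ$ is linear and $\|\caQ(cX)\|=c\|\caQ(X)\|$ for $c>0$, the weighted disjoint union gives $F(\scrK)=\sum_{\caP}p_\caP F(\scrK_\caP)$, and the remark following \lref{lem:bisep}---which upgrades that lemma from $\Sym_2(\caH)\otimes\caH$ to $(\rmA\rmB|\rmC)$-separable POVMs on $\caH^{\otimes3}$ via \lref{lem:POVMsym}, together with the invariance of $F$ under relabeling of parties implied by \lref{lem:Q}---yields $F(\scrK_\caP)\leq F_\bs$ for every bipartition. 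Hence $F(\scrK)\leq\bigl(\sum_\caP p_\caP\bigr)F_\bs=F_\bs$, and therefore $F(\scrM)\leq F_\bs$.

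For the saturation clause I would specialize to $\scrM=\{M_j\}_j$ that is $(\rmA\rmB|\rmC)$ separable and work with $\scrM_\caS=\{(P_2\otimes\bbone)M_j(P_2\otimes\bbone)\}_j$. The key preliminary observation is that $\scrM_\caS$ is a separable POVM on $\Sym_2(\caH)\otimes\caH$: it sums to $(P_2\otimes\bbone)\bbone^{\otimes3}(P_2\otimes\bbone)=P_2\otimes\bbone$, which is the identity on $\Sym_2(\caH)\otimes\caH$, and conjugating an $(\rmA\rmB|\rmC)$-separable operator by $P_2\otimes\bbone$ produces a sum of product operators across the $\Sym_2(\caH)\,|\,\caH$ cut. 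For the forward direction, $F(\scrM)=F_\bs$ forces $F(\scrM_\caS)=F_\bs$ by \lref{lem:POVMsym}, so the ``only if'' part of \lref{lem:bisep} makes every nonzero $(P_2\otimes\bbone)M_j(P_2\otimes\bbone)$, after normalization, a symmetric-local-unitary image of $|\tPsi\>\<\tPsi|$; any vanishing terms carry $\caQ=0$ by \lref{lem:Q} and are irrelevant. For the converse, if each $(P_2\otimes\bbone)M_j(P_2\otimes\bbone)$ is proportional to a symmetric-local-unitary image of $|\tPsi\>\<\tPsi|=|\tPhi\>\<\tPhi|\otimes|+\>\<+|$, then, since $U^{\otimes2}$ preserves $\Sym_2(\caH)$, each such operator is automatically separable across the $\Sym_2(\caH)\,|\,\caH$ cut, so $\scrM_\caS$ is a separable POVM on $\Sym_2(\caH)\otimes\caH$ all of whose normalized elements are symmetric-local-unitary equivalent to $|\tPsi\>\<\tPsi|$; the ``if'' part of \lref{lem:bisep} then gives $F(\scrM_\caS)=F_\bs$, hence $F(\scrM)=F_\bs$.

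Since the substantive analysis---the exact value $F_\bs$ and the rigidity of the optimizers---is already packaged in \lref{lem:bisep}, the remaining work is essentially bookkeeping. I expect the only point requiring genuine care to be the compatibility of the symmetrization with the relevant structures: that $\scrM\mapsto\scrM_\caS$ sends $(\rmA\rmB|\rmC)$-separability to separability across the $\Sym_2(\caH)\,|\,\caH$ cut, that it maps (and, read backwards, reflects) the symmetric-local-unitary equivalence class of $|\tPsi\>\<\tPsi|$, and that the harmless edge case in which some $(P_2\otimes\bbone)M_j(P_2\otimes\bbone)$ vanishes---such a term contributing nothing to $F$ by \lref{lem:Q}---does not interfere with the argument.
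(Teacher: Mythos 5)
Your proposal is correct and follows essentially the same route as the paper: the universal bound via coarse graining of the disjoint union $\bigsqcup_\caP p_\caP\scrK_\caP$, additivity of $F$ over that union, and Lemmas~\ref{lem:POVMsym} and \ref{lem:bisep} to bound each $F(\scrK_\caP)$; and the saturation clause by passing to $\scrM_\caS$ as a separable POVM on $\Sym_2(\caH)\otimes\caH$ and invoking the iff condition of \lref{lem:bisep}. The extra care you take with vanishing symmetrized elements and with checking that symmetrization preserves separability across the $\Sym_2(\caH)\,|\,\caH$ cut is sound but does not change the argument.
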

Note that biseparable measurements can achieve a higher estimation fidelity than local measurements, but there is a fundamental gap from genuine collective measurements, as summarized in \tref{tab:summary}.

\begin{table}[t]
	\centering
	\caption{\label{tab:summary}Maximum estimation fidelities of four different types of measurements.}    
	\renewcommand\arraystretch{2}
	\begin{tabular}{ l  l }
		\hline
		\hline
		Measurement & Maximum estimation fidelity \\
		\hline 
		Local & $\frac{1}{2} + \frac{\sqrt{3}}{6} \approx 0.78868$ \cite{BagaMM05}\\ 

		$1+2$ adaptive &  $\frac{1}{2} + \frac{11 + \sqrt{41}}{60} \approx 0.79005$  \\

		\makecell[l]{Biseparable\\
			(or $2+1$ adaptive)} & $\frac{1}{2} + \frac{\sqrt{22}}{16} \approx 0.79315$ \\
		Genuine collective & $\frac{4}{5}$ \cite{MassP95}\\
		\hline
		\hline
	\end{tabular}
\end{table}

\subsection{$2+1$ adaptive measurements}

Here we show that $2+1$ adaptive measurements can achieve the maximum estimation fidelity of biseparable measurements. 

\begin{theorem}\label{thm:2+1}
Suppose $\scrM = \{M_j\}_j$ is a $2+1$ adaptive POVM on $\caH^{\otimes 3}$   with respect to the bipartition $(\rmA \rmB|\rmC)$. Then 
	\begin{equation}\label{eq:2+1}
		F(\scrM)\leq F_{2\rightarrow 1}:=	\frac{1}{2} + \frac{\sqrt{22}}{16};
	\end{equation}
	the upper bound is saturated  iff $(P_2\otimes \bbone)M_j(P_2\otimes \bbone)$ for each $j$ is proportional to a quantum state that is equivalent to $|\tPsi\>\<\tPsi|$ under a symmetric local unitary transformation.
\end{theorem}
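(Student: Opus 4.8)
The plan is to reduce the $2+1$ adaptive case to the already-established biseparable bound of \Lref{lem:bisep} and \Thref{thm:bisep}. Since a $2+1$ adaptive POVM with respect to $(\rmA\rmB|\rmC)$ is by construction a special case of an $(\rmA\rmB|\rmC)$ separable POVM (each element $M_j=M'_{\rmA\rmB}\otimes M''_\rmC$ is a product, hence $(\rmA\rmB|\rmC)$ separable), the upper bound $F(\scrM)\leq F_\bs=\tfrac12+\tfrac{\sqrt{22}}{16}$ is immediate from \Lref{lem:bisep} combined with \Lref{lem:POVMsym} (apply the symmetrization $\scrM\mapsto\scrM_\caS$ on the $\rmA\rmB$ factor, which preserves $F$ and the product structure on the $(\rmA\rmB|\rmC)$ cut). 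The saturation condition stated in the theorem is likewise inherited verbatim from \Thref{thm:bisep}: equality forces $(P_2\otimes\bbone)M_j(P_2\otimes\bbone)$ to be proportional to a symmetric-local-unitary image of $|\tPsi\>\<\tPsi|$. So the only genuine content of the theorem is \emph{achievability}: I must exhibit a $2+1$ adaptive POVM meeting this bound.

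For the construction I would lean on \Pref{pro:bisepSymPOVM} and the structure of $|\tPsi\>=|\tPhi\>\otimes|+\>$. The key observation is that $|\tPsi\>$ is already a product across the $(\rmA\rmB|\rmC)$ cut, so a POVM of the form $\{6U_j^{\otimes3}|\tPsi\>\<\tPsi|U_j^{\dag\otimes3}/m\}$ built from a unitary $3$-design is automatically $(\rmA\rmB|\rmC)$ separable with product elements — but it is \emph{not} adaptive, since all parties apply the same $U_j$ and Charlie's measurement does not depend on the $\rmA\rmB$ outcome. The trick to make it adaptive is to realize that optimality only requires each normalized element to be SLU-equivalent to $|\tPsi\>\<\tPsi|$, and there is freedom in which direction the "extra" local unitary points. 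Concretely, I would: (i) have Alice and Bob perform a fixed two-copy POVM on $\Sym_2(\caH)$ whose elements are $\{3V_k^{\otimes2}|\tPhi\>\<\tPhi|V_k^{\dag\otimes2}/m\}$ for a unitary $2$-design $\{V_k\}$ (a POVM on $\Sym_2(\caH)$ by \Pref{pro:bisepSymPOVM}); (ii) upon outcome $k$, have Charlie measure in the basis $\{V_k|+\>,V_k|-\>\}$ (or a suitable refinement thereof to restore the identity resolution on $\caH_\rmC$). I then need to check that the resulting product POVM on $\caH^{\otimes3}$, after the $P_2\otimes\bbone$ projection, has elements proportional to $V_k^{\otimes3}|\tPsi\>\<\tPsi|V_k^{\dag\otimes3}$ — which are SLU-equivalent to $|\tPsi\>\<\tPsi|$ — and that the $\rmA\rmB$-marginal POVM combined with Charlie's conditional bases still sums to $P_2\otimes\bbone$, invoking the $3$-design property to handle the $|+\>/|-\>$ split. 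By \Thref{thm:bisep}'s saturation criterion this POVM attains $F_\bs$, giving $F_{2\to1}=F_\bs$.

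The main obstacle I anticipate is \textbf{the resolution-of-identity bookkeeping in step (ii)}: after Alice and Bob get outcome $k$, Charlie must measure a POVM $\scrB_k$ on $\caH_\rmC$ such that the \emph{ensemble average} $\sum_k p_k\, (\text{element from } \scrB_k)$ reconstructs $\bbone_\rmC$ when weighted correctly, while \emph{simultaneously} each joint element restricts under $P_2\otimes\bbone$ to a multiple of $V_k^{\otimes3}|\tPhi\>\<\tPhi|V_k^{\dag\otimes2}\otimes|+\>\<+|$-type states. Because $|+\>$ alone is rank-$1$ and cannot resolve $\bbone_\rmC$, Charlie's conditional POVM must include both $V_k|+\>$ and $V_k|-\>$ outcomes, and one must verify — using that $|\tPhi\>\otimes|+\>$ and $|\tPhi\>\otimes|-\>$ are $Z^{\otimes3}$-equivalent, as noted after \eqref{eq:tPsi} — that \emph{both} branches are SLU-equivalent to $|\tPsi\>\<\tPsi|$ and hence both optimal. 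The calculation that the full set $\{3V_k^{\otimes2}|\tPhi\>\<\tPhi|V_k^{\dag\otimes2}/m\}_k\otimes\{|V_k\pm\>\<V_k\pm|\}$ sums to $P_2\otimes\bbone_\rmC$ is exactly the $3$-design averaging identity underlying \Pref{pro:bisepSymPOVM} (i.e.\ the identity denoted \eqref{eq:IPsi} in \aref{app:proofofthm2}), reorganized so that Charlie's choice is conditioned on $k$; checking that this reorganization is legitimate — that conditioning does not spoil the sum — is the crux. Everything else is routine substitution into \eqref{eq:EstimationFidDef} and appeal to the earlier results.
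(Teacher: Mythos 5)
Your proposal is correct and follows essentially the same route as the paper: the bound and the saturation condition are inherited directly from \Thref{thm:bisep} (since $2+1$ adaptive POVMs are biseparable), and achievability is obtained by having Alice and Bob measure a unitary $2$-design orbit of $|\tPhi\>$ on $\Sym_2(\caH)$ (completed by $P_2^\caA$) while Charlie measures the conditioned basis $\{V_k|+\>,V_k|-\>\}$, exactly as in \eref{eq:POVMsym2} and \eref{eq:OptPOVMbs2+1}. The only place you overcomplicate things is the ``resolution-of-identity bookkeeping'' you call the crux: for an adaptive POVM $\{A_k\otimes B_{k\ell}\}_{k,\ell}$ completeness is automatic because each conditional measurement $\{B_{k\ell}\}_\ell$ is itself a complete POVM on $\caH_\rmC$, so $\sum_{k,\ell}A_k\otimes B_{k\ell}=\sum_k A_k\otimes\bbone$ with no ensemble averaging or $3$-design identity needed on Charlie's side --- the design property is only required to make Alice and Bob's elements sum to $P_2$.
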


\Thref{thm:2+1} is a simple corollary of  \thref{thm:bisep} given that $2+1$ adaptive POVMs are biseparable. A direct proof of \eref{eq:2+1} can be found in \aref{app:2+1}.

By virtue of \pref{pro:bisepSymPOVM} and \thref{thm:2+1}, it is easy to construct optimal $2+1$ adaptive POVMs that can attain the maximum estimation fidelity $F_{2\rightarrow 1}$. Suppose 	$\{U_j\}_{j=1}^m$ is a unitary 2-design on $\caH$, then $\bigl\{3U_j^{\otimes 2}|\tPhi\>\<\tPhi|U_j^{\dag\otimes 2}/m\bigr\}_{j=1}^m$ is a POVM on $\Sym_2(\caH)$ by \pref{pro:bisepSymPOVM}, where $|\tPhi\>$ is defined in \eref{eq:tPsi}; accordingly,
\begin{equation}\label{eq:POVMsym2}
	\biggl\{\frac{3}{m}U_j^{\otimes 2}|\tPhi\>\<\tPhi|U_j^{\dag\otimes 2}\biggr\}_{j=1}^m \cup \bigl\{P_2^\caA\bigr\}
\end{equation}
is a POVM on $\caH^{\otimes 2}$. Now, an optimal $2+1$ adaptive POVM can be realized as follows: Alice and Bob first perform the POVM in \eref{eq:POVMsym2} and send the outcome to Charlie; if they obtain outcome $3U_j^{\otimes 2}|\tPhi\>\<\tPhi|U_j^{\dag\otimes 2}/m$ (note that the outcome $P_2^\caA$ can never occur), then Charlie performs the projective measurement on the eigenbasis of $U_jXU_j^\dag$, which is mutually unbiased with the Schmidt basis of $U_j^{\otimes 2}|\tPhi\>\<\tPhi|U_j^{\dag\otimes 2}$ for each party.

An explicit optimal POVM can be constructed by virtue of the single-qubit Clifford group $\Cl_1$ or a suitable subgroup. Recall that the Pauli group is generated by the three Pauli operators $X,Y,Z$. 
The Clifford group $\Cl_1$ is the normalizer of the Pauli group and is a unitary 3-design \cite{Zhu17MC,Webb16}. Up to overall phase factors, it is generated by the Hadamard gate $H$ and phase gate~$S$, where
\begin{align}\label{eq:HS}
	H=\frac{1}{\sqrt{2}}\begin{pmatrix}
		1 &1\\ 
		1 &-1
	\end{pmatrix} , \quad S=\begin{pmatrix}
		1 &0\\ 
		0 &\rmi
	\end{pmatrix}. 
\end{align}
Let $V=HS$ and let $\caG$ be the group generated by $X$ and $V$, then $\caG$ is a subgroup of $\Cl_1$ that contains the Pauli group. Let 
\begin{align}
	\overline{\caG}:=\bigl\{\bbone, V, V^2\bigr\}\times \{\bbone, X, Y, Z\}; 
\end{align}
then $\overline{\caG}$ can be identified as the quotient group of $\caG$ after identifying operators that are proportional to each other. Thanks to a criterion derived in \rcite{Zhu17MC}, it is straightforward to verify that  $\overline{\caG}$ forms a unitary 2-design.

By virtue of $\overline{\caG}$ we can  construct an optimal  $2+1$ adaptive POVM as explained above. Now the construction can be simplified slightly because the state $|\tPhi\>$ is
stabilized by the operator $Z^{\otimes 2}$. 
To be specific, the group $\overline{\caG}$ can be replaced by the following subgroup: 
\begin{equation}
	\overline{\caG}_2:=\bigl\{\bbone, V, V^2\bigr\}\times\{\bbone,X\}.
\end{equation}
Let
\begin{equation}
	\!\!\scrM_2:= \Bigl\{\frac{1}{2}U^{\otimes 2}|\tPhi\>\<\tPhi|U^{\dag\otimes 2}\Big| U\in \overline{\caG}_2\Bigr\}\cup\bigl\{P_2^\caA\bigr\};
\end{equation}
then $\scrM_2$ is a POVM on $\caH^{\otimes 2}$ although $\overline{\caG}_2$ is not a 2-design. 
On this basis, an optimal $2+1$ adaptive POVM can be realized as follows: Alice and Bob first perform the POVM $\scrM_2$ and send the outcome to Charlie. If they obtain outcome $U^{\otimes 2}|\tPhi\>\<\tPhi|U^{\dag\otimes 2}/2$, then Charlie performs the projective measurement on the eigenbasis of $UXU^\dag$. 
The resulting POVM has 13 POVM elements and can  be expressed as follows:
\begin{align}\label{eq:OptPOVMbs2+1}
	\scrM_{2\rightarrow 1}:=\Bigl\{\frac{1}{2}U^{\otimes 3}|\tPsi\>\<\tPsi|U^{\dag\otimes 3}\,\Big|\, U\in \overline{\caG} \Bigr\}\cup \bigl\{P_2^\caA\otimes \bbone\bigr\}, 
\end{align}
where $|\tPsi\>$ is defined in \eref{eq:tPsi}. By virtue of \thref{thm:bisep} it is also straightforward to verify that $\scrM_{2\rightarrow 1}$ is an optimal biseparable 
POVM on $\caH^{\otimes 3}$ (although $\overline{\caG}$ is not a 3-design).

\subsection{$1+2$ adaptive measurements}

In this section, we determine the maximum estimation fidelity of $1+2$ adaptive measurements and devise an optimal strategy. 

\begin{theorem}	\label{thm:1+2}
The maximum estimation fidelity of  $1+2$ adaptive POVMs on $\caH^{\otimes 3}$ is
\begin{equation}\label{eq:1+2}
F_{1\rightarrow 2}:=	\frac{1}{2} + \frac{11 + \sqrt{41}}{60}.
	\end{equation}
\end{theorem}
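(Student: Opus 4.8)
The plan is to reduce the statement to an optimization over POVMs on the single‑qubit symmetric subspace $\Sym_2(\caH)$ and then solve that optimization exactly. Fix the bipartition $(\rmA\rmB|\rmC)$; the answer is the same for every bipartition by permutation symmetry. A $1+2$ adaptive POVM is $\scrM=\{D_{kj}\otimes C_k\}_{k,j}$ with $\{C_k\}_k$ a POVM on $\caH_\rmC$ and $\{D_{kj}\}_j$ a POVM on $\caH_\rmA\otimes\caH_\rmB$ for each $k$. Since $\caQ$ is linear and the spectral norm is subadditive, splitting each $C_k$ into rank‑one pieces (keeping $D_{kj}$ fixed) does not decrease $F$, so we may take $C_k=c_k|\chi_k\>\<\chi_k|$ with $c_k>0$ and $\sum_k c_k=\tr\bbone_\caH=2$. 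By \Lref{lem:Q}, $\caQ(D_{kj}\otimes C_k)=c_k\caQ\big((P_2D_{kj}P_2)\otimes|\chi_k\>\<\chi_k|\big)$, where $\{P_2D_{kj}P_2\}_j$ is a POVM on $\Sym_2(\caH)$; rotating $|\chi_k\>$ to $|0\>$ by a symmetric local unitary (which leaves $\|\caQ(\cdot)\|$ invariant) turns this into a POVM $\{E_j\}_j$ on $\Sym_2(\caH)$. Hence
\begin{equation}
F(\scrM)\le\frac{g}{60},\qquad g:=\max_{\{E_j\}}\ \sum_j\big\|\caQ\!\big(E_j\otimes|0\>\<0|\big)\big\|,
\end{equation}
the maximum over all POVMs $\{E_j\}$ on $\Sym_2(\caH)$; and this bound is attained by letting Charlie measure $\{|0\>\<0|,|1\>\<1|\}$ and letting Alice and Bob respond with the optimal $\{E_j\}$ (and its $X^{\otimes2}$‑conjugate for outcome $|1\>$). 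So $F_{1\to2}=g/60$, and it remains to prove $g=41+\sqrt{41}$, which is equivalent to $F_{1\to2}=\tfrac12+\tfrac{11+\sqrt{41}}{60}$.

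To compute $g$, write $|\phi\>=a|00\>+b\tfrac{|01\>+|10\>}{\sqrt2}+c|11\>\in\Sym_2(\caH)$ and expand $\caQ(|\phi\>\<\phi|\otimes|0\>\<0|)$ using $P_4=\sum_{k=0}^{4}|S_{4,k}\>\<S_{4,k}|$ in the symmetric basis of $\caH^{\otimes4}$; this yields a $2\times2$ operator whose largest eigenvalue, after maximizing over internal phases, is
\begin{equation}
h(|\phi\>)=5(3a^2+2b^2+c^2)+\sqrt{(9a^2+2b^2-c^2)^2+8b^2(3a+2c)^2}\qquad(a,b,c\ge0).
\end{equation}
Since $E\mapsto\|\caQ(E\otimes|0\>\<0|)\|$ is convex and positively homogeneous, $g$ is attained by a rank‑one POVM, so $g=\sup\{\sum_j w_j h(|\phi_j\>):\sum_j w_j|\phi_j\>\<\phi_j|=P_2\}$; moreover $h$ is invariant under $|\phi\>\mapsto(\mathrm{diag}(1,e^{\rmi\theta}))^{\otimes2}|\phi\>$, which lets one restrict the eventual dual witness to diagonal operators and the inequalities to real $a,b,c$.

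For the lower bound $g\ge41+\sqrt{41}$ I would exhibit, for $r\in[2,3]$, the POVM consisting of a point mass $(3-r)|00\>\<00|$ together with the (finitely discretized) $U(1)$‑orbit of $|\phi_r\>=\sqrt{1-2/r}\,|00\>+\tfrac1{\sqrt r}\tfrac{|01\>+|10\>}{\sqrt2}+\tfrac1{\sqrt r}|11\>$ scaled to contribute $r\,\mathrm{diag}(1-\tfrac2r,\tfrac1r,\tfrac1r)$; these sum to $P_2$, with objective value $57-9r+\sqrt{81r^2-234r+177+96\sqrt{r-2}}$. Putting $r=2+s^2$ and setting the derivative to zero gives $(3s+2)(18s^2-9s-4)=0$, whose root in range is $s^\ast=\tfrac{3+\sqrt{41}}{12}$ (the discriminant $369=41\cdot 9$ is what produces $\sqrt{41}$), at which the value equals $44+\tfrac{8}{3s^\ast}=41+\sqrt{41}$. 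For the upper bound $g\le41+\sqrt{41}$ I would take $\Omega=\mathrm{diag}\big(24,\ \tfrac{4(7+\sqrt{41})}{\sqrt{41}}+8,\ \tfrac{13}{\sqrt{41}}+5\big)$ on $\Sym_2(\caH)$, for which $\tr\Omega=41+\sqrt{41}$: if $h(|\phi\>)\le\<\phi|\Omega|\phi\>$ for every unit $|\phi\>$, then decomposing each $E_j$ spectrally and using subadditivity of the norm gives $\sum_j\|\caQ(E_j\otimes|0\>\<0|)\|\le\sum_j\tr(\Omega E_j)=\tr\Omega$, hence $g\le41+\sqrt{41}$, matching the lower bound.

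The main obstacle is verifying the pointwise bound $h(|\phi\>)\le\<\phi|\Omega|\phi\>$. Reducing to real $a,b,c\ge0$, it reads $L\ge0$ and $L^2\ge(9a^2+2b^2-c^2)^2+8b^2(3a+2c)^2$ with $L=9a^2+(\mu^2-2)b^2+(\nu^2+1)c^2$, $\mu^2=\tfrac{4(7+\sqrt{41})}{\sqrt{41}}$, $\nu^2=\tfrac{13}{\sqrt{41}}-1$; using $L^2-(9a^2+2b^2-c^2)^2=\big(\tfrac{28}{\sqrt{41}}b^2+\tfrac{13+\sqrt{41}}{\sqrt{41}}c^2\big)\big(18a^2+\mu^2b^2+\nu^2c^2\big)$ reduces the claim to a homogeneous quartic inequality in $a,b,c$, tight at $(1,0,0)$ and along the orbit of $|\phi_{r^\ast}\>$, to be checked on the positive octant. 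Equivalently and perhaps more cleanly, one checks that the operator $R(|\nu\>\<\nu|)$ on $\Sym_2(\caH)$ dual to $E\mapsto\<\nu|\caQ(E\otimes|0\>\<0|)|\nu\>$ satisfies $R(|\nu\>\<\nu|)\preceq\Omega$ for every $|\nu\>\in\caH$, i.e.\ that a one‑parameter family of $3\times3$ Hermitian pencils is positive semidefinite; its principal minors turn out nonnegative precisely when a governing quadratic has nonpositive discriminant, and minimizing $\tr\Omega$ subject to this constraint forces $\sqrt{41}$ (the relevant discriminant again being $41$ times a perfect square). This verification, together with the correct evaluation of $\caQ$ on $\Sym_2(\caH)\otimes\caH$, is the only substantial computation; the reduction and the one‑variable optimization are elementary.
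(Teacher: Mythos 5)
Your overall architecture is sound, and most of it coincides with the paper's proof after a change of packaging: your $h(|\phi\>)$ is the paper's $\eta(a,b,c,0)$ from \lref{lem:q_abc}; your lower-bound family $\bigl\{(3-r)|00\>\<00|\bigr\}$ plus the phase orbit of $|\phi_r\>$, evaluated at $r=2+s^{*2}$ with $s^*=(3+\sqrt{41})/12$, is exactly the POVM $\scrK_0$ of \eref{eq:scrE01} restricted to $\Sym_2(\caH)$ (your $3-r$ equals the paper's $3p$, and the four unitaries $(S\otimes S)^j$ are your discretized orbit), and your one-variable optimization checks out; your per-outcome decoupling, which solves $g=\max\sum_j\|\caQ(E_j\otimes|0\>\<0|)\|$ once and multiplies by $\sum_k c_k=2$, is a clean and valid substitute for the pooled constraint of \lref{lem:canonical1+2}; and your diagonal witness $\Omega=\mathrm{diag}\bigl(24,\ \tfrac{28}{\sqrt{41}}+12,\ \tfrac{13}{\sqrt{41}}+5\bigr)$ is precisely the quadratic form that encodes the paper's tangent-line bound $f(x,y)\le\alpha x+\beta y+\gamma$ of \lref{lem:fxyABC} together with the residual term $10+5y$ of $\eta$, so the dual-witness formulation is equivalent to the paper's primal argument.

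The genuine gap is that the one step carrying all of the difficulty --- the pointwise inequality $h(|\phi\>)\le\<\phi|\Omega|\phi\>$, equivalently $f(x,y)\le\alpha x+\beta y+\gamma$ on $0\le x\le 1$, $|y|\le x$ --- is never actually proved. You reduce it to "a homogeneous quartic inequality in $a,b,c$ \ldots to be checked on the positive octant" and to the positivity of a family of $3\times 3$ pencils, but both are left as declarations of intent. This is not a routine verification: the bound is tight at two distinct points, $(x,y)=(1,1)$ and $(x,y)=(x_0,y_0)$, so no strict-convexity or single-critical-point argument closes it, and the paper spends essentially all of \aref{app:1+2} on exactly this lemma (the analysis of $\ell(\zeta)$, the function $h(\zeta)$, and the roots $u_0,u_\pm$). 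Until that quartic inequality is established, the upper bound $g\le 41+\sqrt{41}$ --- and hence the theorem --- is unproven; only the achievability direction is complete. Two smaller points to tidy: append $P_2^\caA$ to Alice and Bob's conditional POVMs so the protocol is a complete POVM on $\caH^{\otimes 2}$ for each of Charlie's outcomes, and note explicitly why four phases suffice to replace the continuous orbit (the charges $0,1,2$ of $|00\>,|\bbS\>,|11\>$ are separated modulo $4$).
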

\Thref{thm:1+2} is proved in \aref{app:1+2}.
Here we construct an optimal POVM that can attain the maximum estimation fidelity in \eref{eq:1+2}. Let 
\begin{equation}\label{eq:pSPhiW}
\begin{aligned}
	p & := \frac{47 - 3\sqrt{41}}{216},\quad  |\bbS\> :=\frac{|01\> + |10\>}{\sqrt{2}},
	\\
	|\Upsilon\> &:= \sqrt{\frac{1 - 3p}{3-3p}}\,|00\> + \sqrt{\frac{1}{3-3p}}\left(|\bbS\> + |11\>\right),\\
	 W_j &:=(S\otimes S)^j,\quad j=0,1,2,3,
\end{aligned}
\end{equation}
where $S$ is the phase gate defined in \eref{eq:HS}. Then we can construct two POVMs $\scrK_0, \scrK_1$ on $\caH^{\otimes 2}$ as follows:
\begin{equation}\label{eq:scrE01}
\begin{aligned}
	K_j&:=\frac{3-3p}{4}W_j|\Upsilon\>\<\Upsilon|W_j^\dag,\quad j=0,1,2,3,\\
	K_4 &:=3p |00\>\< 00|,\quad K_5:=P_2^\caA,\\
	\scrK_0 & := \{K_j\}_{j=0}^5,\quad \scrK_1:=\bigl\{X^{\otimes 2}K_jX^{\otimes 2}\bigr\}_{j=0}^5.
\end{aligned}
\end{equation}
On this basis, we can construct an optimal $1+2$ adaptive POVM:
\begin{align}\label{eq:OptPOVM1+2}
\scrM_{1\rightarrow 2} :=\{\scrK_0\otimes |0\>\<0|\}\cup \{\scrK_1\otimes |1\>\<1|\}. 
\end{align}
This POVM can be realized as follows: Charlie first performs the $Z$-basis measurement on his qubit and sends the measurement outcome to Alice and Bob.  If the outcome is 0, then Alice and Bob perform the POVM $\scrK_0$ on their qubits; if the outcome is~1, then they perform the POVM $\scrK_1$ instead. Note that 
not all normalized POVM elements of $\scrM_{1\rightarrow 2}$ supported in the subspace $\Sym_2(\caH)\otimes \caH$
are equivalent under symmetric local unitary transformations, in sharp contrast with optimal $2+1$
adaptive POVMs as clarified in \thref{thm:2+1}. This distinction further highlights the importance of communication direction.

\bigskip

\section{Conclusion}
\label{sec: conclusion}

In this work we introduced rigorous definitions of biseparable measurements and genuine collective measurements, thereby 
setting the stage for exploring the rich territory of multipartite nonclassicality in quantum measurements instead of quantum states. By virtue of a simple estimation problem, we established a rigorous fidelity gap between biseparable measurements and genuine collective measurements. Moreover, we showed that the maximum estimation fidelity of biseparable measurements can be attained 
by  $2+1$ adaptive measurements, but not by  $1+2$ adaptive measurements. Optimal estimation protocols in all these settings are constructed explicitly. Our work shows that quantum measurements in the multipartite setting may exhibit a rich hierarchy of nonclassical phenomena, which offer exciting opportunities for future studies.

\section*{Acknowledgments}
The work at Fudan University is supported by the National Key Research and Development Program of China
(Grant No. 2022YFA1404204), Shanghai Science and Technology Innovation Action Plan (Grant No. 24LZ1400200), Shanghai Municipal Science and Technology Major Project
(Grant No. 2019SHZDZX01), Innovation Program for Quantum Science and Technology (Grant No. 2024ZD0300101), 
and the National Natural Science Foundation of China (Grant No. 92165109). The work at the University of Science and Technology of China is supported by the Innovation
Program for Quantum Science and Technology (Grant No.2023ZD0301400), the National Natural Science Foundation
of China (Grants No. 62222512, No. 12104439, and No.
12134014), the Anhui Provincial Natural Science Foundation
(Grant No. 2208085J03), and USTC Research Funds of the
Double First-Class Initiative (Grants No. YD2030002007 and
No. YD2030002011).
\onecolumngrid

\bigskip
\appendix

\counterwithin{lemma}{section}
\renewcommand{\thelemma}{\Alph{section}\arabic{lemma}} 
\setcounter{lemma}{0} 

\section{Proofs of \lref{lem:bisep} and \thref{thm:bisep}}
\label{app:proofofthm2}

 In this and all following Appendixes, we prove the key results
on optimal estimation of three parallel spins presented in the main text, namely, \lref{lem:bisep}, \thsref{thm:bisep} and \ref{thm:1+2}, and \eref{eq:2+1}. Throughout the Appendixes, we assume that $\caH$ is a single-qubit Hilbert space. 

\subsection{Auxiliary results}
Suppose $a,b,c,x,y,\phi$ are real numbers.  Define
\begin{align}
\eta(a,b,c,\phi)&:= 15 a^2 + 10b^2 + 5c^2  + \sqrt{8b^2|3a + 2c\rme^{\rmi\phi}|^2 + (9a^2 + 2b^2 - c^2)^2},   \label{eq:etaabcphiDef}\\
\eta(a,b,c)&:= 15 a^2 + 10b^2 + 5c^2  + \sqrt{8b^2(3|a| + 2|c|)^2 + (9a^2 + 2b^2 - c^2)^2}, \label{eq:etaabcDef}\\
f(x,y)& := \sqrt{4(1-x)(3\sqrt{x + y} + 2\sqrt{x - y}\lsp)^2 + (2x + 5y + 2)^2}. \label{eq:fxy}
\end{align}
Given any state $|\Psi \>$ in $\caH^{\otimes 3}$, define
\begin{equation}
\caT_\Psi := \int_{\mathrm{Haar}}\rmd U U^{\otimes 3} |\Psi\>\<\Psi| U^{\dag\otimes 3},   
\end{equation}
where the integration is taken over the normalized Haar measure on the unitary group $\rmU(\caH)$. 
Alternatively,  the integration can be replaced by summation over a unitary 3-design \cite{DankCEL09,GrosAE07}.

\begin{lemma}\label{lem:q_abc}
Suppose $|\Phi \> = a|00\> + b|\bbS\> + c\rme^{\rmi\phi}|11\>\in \caH^{\otimes 2}$ and $|\Psi\>=|\Phi\>\otimes |0\>\in \caH^{\otimes 3}$, where $a,b,c,\phi$ are real numbers and $|\bbS\>=(|01\>+|10\>)/\sqrt{2}$. Then 
\begin{gather}
6\tr(P_3 |\Psi\>\<\Psi|) = 6a^2+4b^2+2c^2, \label{eq:P3Psi0}\\ 
\caT_\Psi = \frac{3a^2+2b^2+c^2}{12}P_3+\frac{b^2+2c^2}{6}(P_2\otimes \bbone-P_3),   \label{eq:IPsi} \\
\|\caQ(|\Psi\>\<\Psi|)\| = \eta(a,b,c,\phi)\le \eta(a,b,c)=15 a^2 + 10b^2 + 5c^2  + f\bigl(a^2+c^2, a^2-c^2\bigr),  \label{eq:QPsinorm2}
\end{gather}
and the last inequality is saturated when $\phi=0$.
\end{lemma}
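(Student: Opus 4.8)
The plan is to establish the three displayed identities in order, the first two feeding into the proof of the third; throughout I use that $|\Psi\>$ (hence $|\Phi\>$) is normalized, so $a^2+b^2+c^2=1$. For \eref{eq:P3Psi0} I write $6\tr(P_3|\Psi\>\<\Psi|)=6\<\Psi|P_3|\Psi\>=\sum_{\sigma\in\caS_3}\<\Psi|\bbW_\sigma|\Psi\>$ using $P_3=\tfrac16\sum_{\sigma}\bbW_\sigma$. Since $|\Phi\>\in\Sym_2(\caH)$ is symmetric under swapping the first two factors, $\bbW_{(12)}|\Psi\>=|\Psi\>$, so $\<\Psi|\bbW_\sigma|\Psi\>$ depends only on the coset of $\sigma$ in $\caS_3/\langle(12)\rangle$; the three cosets are represented by $\mathrm{id}$, $(13)$, $(23)$ (each of size two), and a one-line computation gives $\<\Psi|\bbW_{(13)}|\Psi\>=\<\Psi|\bbW_{(23)}|\Psi\>=a^2+b^2/2$. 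Hence the sum equals $2\bigl(1+2(a^2+b^2/2)\bigr)=2+4a^2+2b^2$, which becomes $6a^2+4b^2+2c^2$ after substituting $1=a^2+b^2+c^2$.

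For \eref{eq:IPsi} I note that $\caT_\Psi$ commutes with every $U^{\otimes 3}$ (right invariance of the Haar measure), i.e.\ it belongs to the commutant of the representation $U\mapsto U^{\otimes3}$. Since $|\Phi\>\in\Sym_2(\caH)$, the vector $|\Psi\>$ lies in the $U^{\otimes3}$-invariant subspace $\Sym_2(\caH)\otimes\caH=\supp(P_2\otimes\bbone)$, so $\caT_\Psi$ is supported there. On $\supp(P_2\otimes\bbone)$ the representation decomposes multiplicity-freely as the spin-$\tfrac32$ piece $\supp(P_3)$ plus a single spin-$\tfrac12$ piece $\supp(P_2\otimes\bbone-P_3)$, so Schur's lemma forces $\caT_\Psi=\lambda_1 P_3+\lambda_2(P_2\otimes\bbone-P_3)$. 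Taking traces against $P_3$ and against $P_2\otimes\bbone-P_3$ and using $\tr P_3=4$, $\tr(P_2\otimes\bbone)=6$, $\<\Psi|(P_2\otimes\bbone)|\Psi\>=1$, together with \eref{eq:P3Psi0}, gives $\lambda_1=\<\Psi|P_3|\Psi\>/4=(3a^2+2b^2+c^2)/12$ and $\lambda_2=(1-\<\Psi|P_3|\Psi\>)/2=(b^2+2c^2)/6$.

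For \eref{eq:QPsinorm2} I compute the single-qubit operator $\caQ(M)=24\tr_{1,2,3}[P_4(M\otimes\bbone)]$ with $M=|\Psi\>\<\Psi|$ directly. Using the standard recursion $P_4=\tfrac14(\bbone+\bbW_{(14)}+\bbW_{(24)}+\bbW_{(34)})(P_3\otimes\bbone)$ and tracing out the first three factors yields $\caQ(M)=6\bigl[\tr(P_3M)\bbone+\tr_{23}(P_3M)+\tr_{13}(P_3M)+\tr_{12}(P_3M)\bigr]$, where the three two-factor partial traces are identified with operators on the output factor. Because $(P_2\otimes\bbone)M=M$, I may replace $P_3M$ by $\tfrac13(\bbone+\bbW_{(13)}+\bbW_{(23)})M=\tfrac13\bigl(|\Psi\>\<\Psi|+(\bbW_{(13)}|\Psi\>)\<\Psi|+(\bbW_{(23)}|\Psi\>)\<\Psi|\bigr)$, reducing every term to a partial trace of one of these three rank-one operators (the $\tr_{13}$ and $\tr_{23}$ contributions coincide, since $P_3M$ is symmetric in the first two factors). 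Only a few distinct $2\times2$ matrices appear — the single-qubit reduced state $R$ of $|\Phi\>$, the rank-one operators $|0\>\<0|$ and $|\alpha\>\<\alpha|$ with $|\alpha\>:=(\bbone\otimes\<0|)|\Phi\>=a|0\>+\tfrac{b}{\sqrt2}|1\>$, and the products $R|0\>\<0|$, $|0\>\<0|R$ — and collecting them with the right multiplicities (using \eref{eq:P3Psi0} for the identity coefficient) gives $\caQ(M)=(6a^2+4b^2+2c^2)\bbone+4R+4|\alpha\>\<\alpha|+2|0\>\<0|+4(R|0\>\<0|+|0\>\<0|R)$. Writing out the entries and using $a^2+b^2+c^2=1$ turns this into a positive Hermitian matrix with diagonal $(24a^2+12b^2+4c^2,\ 6a^2+8b^2+6c^2)$ and off-diagonal (above the diagonal) $2\sqrt2\,b(3a+2c\rme^{-\rmi\phi})$. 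For a positive $2\times2$ matrix $H$ one has $\|H\|=\tfrac12\tr H+\sqrt{\bigl(\tfrac{H_{00}-H_{11}}{2}\bigr)^2+|H_{01}|^2}$, so $\|\caQ(M)\|=15a^2+10b^2+5c^2+\sqrt{(9a^2+2b^2-c^2)^2+8b^2|3a+2c\rme^{\rmi\phi}|^2}=\eta(a,b,c,\phi)$. The bound $\eta(a,b,c,\phi)\le\eta(a,b,c)$ is then just the triangle inequality $|3a+2c\rme^{\rmi\phi}|\le 3|a|+2|c|$, with equality when $3a$ and $2c\rme^{\rmi\phi}$ are non-negative multiples of a common phase, in particular at $\phi=0$ once the signs of $a,c$ are absorbed into $\phi$ (which does not change $\eta(a,b,c,\phi)$). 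Finally, substituting $x=a^2+c^2$, $y=a^2-c^2$ into \eref{eq:fxy} and using $1-x=b^2$, $3\sqrt{x+y}+2\sqrt{x-y}=\sqrt2\,(3|a|+2|c|)$, and $2x+5y+2=9a^2+2b^2-c^2$ identifies $\eta(a,b,c)$ with $15a^2+10b^2+5c^2+f(a^2+c^2,a^2-c^2)$.

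The main obstacle is the bookkeeping in the evaluation of $\caQ(M)$: tracking which partial trace each permutation term produces and assembling the pieces with the correct coefficients — the conceptual content is slight, but an index slip is easy. A convenient independent check on the resulting matrix is its trace: from \eref{eq:IPsi} and the covariance $\caQ(U^{\otimes3}MU^{\dag\otimes3})=U\caQ(M)U^\dag$ one gets $\caQ(\caT_\Psi)=\tfrac12\tr[\caQ(M)]\,\bbone$ on one side and $\caQ(\caT_\Psi)=\lambda_1\caQ(P_3)+\lambda_2\caQ(P_2\otimes\bbone-P_3)=60\lambda_1\bbone$ on the other (since $\caQ(P_3)=\caQ(P_2\otimes\bbone)=60\bbone$, whence $\caQ(P_2\otimes\bbone-P_3)=0$), forcing $\tr[\caQ(M)]=30a^2+20b^2+10c^2$, consistent with the diagonal found above.
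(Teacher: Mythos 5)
Your proposal is correct and follows essentially the same route as the paper: the paper's proof establishes \eref{eq:IPsi} by the same Schur--Weyl/symmetry argument (you phrase it via the multiplicity-free decomposition of $\Sym_2(\caH)\otimes\caH$, the paper via the span of $\bbW_\sigma$ constrained by $\bbW_{(12)}$-invariance, which is the same reasoning) and otherwise simply states \eref{eq:P3Psi0} and the Pauli expansion of $\caQ(|\Psi\>\<\Psi|)$ as "straightforward calculation." Your explicit evaluation via the recursion $P_4=\tfrac14\bigl(\bbone+\sum_k\bbW_{(k4)}\bigr)(P_3\otimes\bbone)$ reproduces exactly the paper's matrix $(15a^2+10b^2+5c^2)\bbone+2\sqrt{2}\lsp b[3a+2c\cos\phi]X+4\sqrt{2}\lsp bc\sin(\phi) Y+(9a^2+2b^2-c^2)Z$, and your remarks on absorbing the signs of $a,c$ into $\phi$ and on normalization being needed for the identification with $f(a^2+c^2,a^2-c^2)$ are accurate.
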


When $a^2 = c^2$, \eref{eq:IPsi} implies that
\begin{equation}
    \caT_\Psi = \frac{1}{6}P_2\otimes \bbone.
\end{equation}
In this case, if $\{U_j\}_{j=1}^m$ is a unitary 3-design, then $\bigl\{6U_j^{\otimes 3}|\Psi\>\<\Psi|U_j^{\dag\otimes 3}/m\bigr\}_{j=1}^m$ is a POVM on $\Sym_2(\caH)\otimes \caH$. 
This conclusion is useful for constructing optimal biseparable POVMs. 

\begin{proof}[Proof of \lref{lem:q_abc}]
\Eref{eq:P3Psi0} can be verified by straightforward calculation. By Schur-Weyl duality, $\caT_\Psi$ is a linear combination of $\bbW_\sigma$ for $\sigma\in \caS_3$. In addition $\caT_\Psi=\bbW_{(12)}\caT_\Psi= \caT_\Psi\bbW_{(12)}$, where $(1 2)$ denotes the transposition of the first two parties.  So $\caT_\Psi$ can only be a linear combination of $P_3$ and $P_2\otimes \bbone$. Now \eref{eq:IPsi} is a simple corollary of \eref{eq:P3Psi0}. 

Next, straightforward calculation yields
\begin{align}
 \caQ(|\Psi\>\<\Psi|) &= \bigl(15a^2 + 10b^2 +5c^2\bigr)\bbone  + 2\sqrt{2}\lsp b[3a + 2c \cos (\phi)] X + 4\sqrt{2}\lsp b c \sin(\phi) Y + \bigl(9a^2 + 2b^2 - c^2\bigr)Z,
    \end{align}
which implies \eref{eq:QPsinorm2} given    the definitions in Eqs.~\eqref{eq:etaabcphiDef}-\eqref{eq:fxy}.     
\end{proof}

\begin{lemma}\label{lem:fxyUB}
	Suppose $0 \le x \le 1$ and $-x \le y \le x$. Then the function $f(x,y)$  defined in \eref{eq:fxy} satisfies
	\begin{equation}\label{eq:fxyUB}
	f(x,y) \le \frac{5}{2}\sqrt{\frac{11}{2}} + 8\sqrt{\frac{2}{11}}\,y;
	\end{equation}
and the inequality is saturated iff $x=9/16$ and $y=0$. 	
\end{lemma}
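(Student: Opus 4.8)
The plan is to square both sides, trade $y$ for the variable $t:=\sqrt{x^2-y^2}$, and reduce the claim to the nonnegativity of a single quadratic polynomial on a triangle; this last step will go through because that quadratic has an indefinite Hessian, so it attains its minimum on the boundary, where everything is explicit.

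First I would check that squaring is legitimate. Since $y\ge-x\ge-1$ and $\bigl(\tfrac52\sqrt{11/2}\bigr)^2=\tfrac{275}{8}>\tfrac{128}{11}=\bigl(8\sqrt{2/11}\bigr)^2$, the right-hand side obeys $\tfrac52\sqrt{11/2}+8\sqrt{2/11}\,y\ge\tfrac52\sqrt{11/2}-8\sqrt{2/11}>0$ on the whole domain, and $f(x,y)\ge0$, so the assertion is equivalent to $f(x,y)^2\le\bigl(\tfrac52\sqrt{11/2}+8\sqrt{2/11}\,y\bigr)^2$. Using $(3\sqrt{x+y}+2\sqrt{x-y})^2=13x+5y+12\sqrt{x^2-y^2}$ one gets
\[
 f(x,y)^2=60x+40y-48x^2+25y^2+4+48(1-x)\sqrt{x^2-y^2},
\]
while the squared right-hand side equals $\tfrac{275}{8}+40y+\tfrac{128}{11}y^2$. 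The terms linear in $y$ cancel, and after rearranging the desired inequality becomes
\[
 48(1-x)\sqrt{x^2-y^2}\ \le\ \tfrac{243}{8}-60x+48x^2-\tfrac{147}{11}y^2;
\]
call this $(*)$.

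Next I would substitute $t:=\sqrt{x^2-y^2}$, so $y^2=x^2-t^2$ and $(x,y)$ runs over the stated domain exactly when $(t,x)$ runs over the triangle $T:=\{(t,x):0\le t\le x\le1\}$. Then $(*)$ becomes $R(t,x)\ge0$ on $T$, where
\[
 R(t,x):=\tfrac{147}{11}t^2-48(1-x)t+\tfrac{381}{11}x^2-60x+\tfrac{243}{8}.
\]
The crucial observation is that $R$ is a quadratic polynomial whose constant Hessian $\begin{pmatrix}\frac{294}{11}&48\\48&\frac{762}{11}\end{pmatrix}$ has determinant $\frac{224028}{121}-2304<0$; being not positive semidefinite, it cannot be the Hessian at an interior local minimum, so on the compact convex set $T$ the minimum of $R$ is attained on $\partial T$. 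It then suffices to check the three edges: on $t=0$, $R(0,x)=\tfrac{381}{11}x^2-60x+\tfrac{243}{8}$ has discriminant $3600-\tfrac{381\cdot243}{22}<0$ and is positive; on $x=1$, $R(t,1)=\tfrac{147}{11}t^2+\tfrac{441}{88}>0$; and on $t=x$ the cross and lower-order terms collapse to $R(x,x)=96x^2-108x+\tfrac{243}{8}=96\bigl(x-\tfrac9{16}\bigr)^2\ge0$, which vanishes only at $x=\tfrac9{16}$.

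Therefore $R\ge0$ on $T$ with equality only at $(t,x)=(\tfrac9{16},\tfrac9{16})$, that is, $x=\tfrac9{16}$ and $y=0$; a direct evaluation gives $f(\tfrac9{16},0)=\sqrt{275/8}=\tfrac52\sqrt{11/2}$, matching the right-hand side there, which pins down the equality case. The main obstacle is purely computational: expanding $f(x,y)^2$ correctly, carrying out the three small discriminant / completion-of-square checks on the edges, and not overlooking the positivity check that makes the squaring step valid. The one genuinely structural point is the indefinite-Hessian reduction to $\partial T$, which replaces what would otherwise be a tedious case analysis of whether the minimizing $t$ sits at an endpoint of $[0,x]$ or at the vertex of $t\mapsto R(t,x)$.
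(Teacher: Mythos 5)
Your proof is correct, and after the (shared) squaring step it takes a genuinely different route from the paper's. Both arguments reduce the claim to the nonnegativity of the same quantity: your $(*)$ is exactly the paper's $r(x,z)\ge 0$ with $z=y^2$. The paper then fixes $x$, treats $r$ as a function of $z$, computes $\partial r/\partial z$, and splits into the cases $0<x<88/137$ (derivative always positive, so $r(x,z)\ge r(x,0)=\tfrac38(9-16x)^2$) and $88/137\le x<1$ (solve for the critical $z_0$ and bound $r(x,z_0)$ explicitly), which involves some unpleasant closed-form expressions. Your substitution $t=\sqrt{x^2-y^2}$ instead turns the irrational term into a linear one, making $R(t,x)$ an honest quadratic polynomial on the triangle $0\le t\le x\le 1$; the indefinite-Hessian observation then forces the minimum onto the boundary, where each edge is a one-variable quadratic handled by a discriminant or a completion of squares. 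This buys a cleaner, case-free argument (the equality case falls out of $R(x,x)=96(x-9/16)^2$ on the edge $t=x$, i.e., $y=0$), at the cost of the small extra bookkeeping that the map $(x,y)\mapsto(t,x)$ is two-to-one — which is harmless here since, as you note, the linear terms in $y$ cancel and the inequality depends on $y$ only through $y^2$. All of your intermediate computations (the expansion of $f^2$, the Hessian determinant, the three edge checks, and the verification that $f(9/16,0)=\tfrac52\sqrt{11/2}$) are accurate.
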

\begin{proof}
	Due to continuity, it suffices to prove \eref{eq:fxyUB} when $0<x<1$ and $-x<y<x$. 		
	Direct calculation yields	
	\begin{gather}
	\frac{5}{2}\sqrt{\frac{11}{2}} + 8\sqrt{\frac{2}{11}}\,y\geq  \frac{5}{2}\sqrt{\frac{11}{2}} - 8\sqrt{\frac{2}{11}}>0,\quad 
	\left(\frac{5}{2}\sqrt{\frac{11}{2}} + 8\sqrt{\frac{2}{11}}\,y\right)^2 - f^2(x,y) = r(x,z),
	\end{gather}
	where  $z := y^2<x^2$ and
	\begin{equation}
	r(x,z): = \frac{243}{8} + 48x^2 - \frac{147z}{11} + 48(x-1)\sqrt{x^2 - z} - 60x.
	\end{equation}
	The partial derivative of $r(x,z)$ over $z$ reads
	\begin{equation}
	\frac{\partial r}{\partial z} = \frac{24(1-x)}{\sqrt{x^2 - z}} - \frac{147}{11}.
	\end{equation}
	If  $0<x< 88/137$, then this derivative is always positive. Therefore, 
	\begin{equation}\label{eq:fxyUBproof}
	r(x,z) \ge r(x,0) = \frac{3}{8}(9-16x)^2\geq 0,
	\end{equation}
	which implies \eref{eq:fxyUB}. 
	If instead $88/137\leq x< 1$, then the partial derivative $\partial r/\partial z$ has a unique zero,  denoted by $z_0$ henceforth. In addition, $z_0$  satisfies the equation 
	\begin{equation}
	\sqrt{x^2 - z_0} = \frac{264}{147}(1-x),
	\end{equation}
	which means 
	\begin{align}
	z_0=\frac{-5343x^2+15488x-7744}{2401}. 
	\end{align}
	Therefore,
	\begin{equation}
	r(x,z) \ge r(x,z_0)=-\frac{3(12168x^2-37664x+18293)}{4312}\geq \frac{91875}{150152}>0,
	\end{equation}
which implies \eref{eq:fxyUB}. Here the last inequality follows from the assumption $88/137\leq x< 1$.

If the inequality in \eref{eq:fxyUB} is saturated,  then $0<x< 88/137$ and the two inequalities in 
\eref{eq:fxyUBproof} are saturated simultaneously, which means $x=9/16$ and $y=z=0$, in which case the inequality in \eref{eq:fxyUB} is indeed saturated. 
\end{proof}

\begin{lemma}\label{lem:etaSumUB}
	Suppose $w_j,a_j,b_j,c_j,\phi_j$ are nonnegative real numbers that satisfy the following conditions:
	\begin{align}
		w_j>0, \quad \phi_j < 2\pi, \quad a_j^2 + b_j^2 + c_j^2 = 1\quad \forall \, j;\quad  \sum_j w_j = 6, \quad \sum_j w_j a_j^2 = \sum_j w_j c_j^2. 
	\end{align}
	Then 
	\begin{equation}\label{eq:etaSumUB}
		\sum_j w_j \eta(a_j,b_j,c_j,\phi_j) \le 60 + \frac{15}{2}\sqrt{22},
	\end{equation}
	and the inequality is saturated iff $(a_j,b_j,c_j,\phi_j) = (3\sqrt{2}/8,\sqrt{7}/4,3\sqrt{2}/8,0)$ for all $j$.
\end{lemma}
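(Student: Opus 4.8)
The plan is to collapse the sum into a term-by-term estimate that is affine in $a_j^2 - c_j^2$, after which the two linear side conditions do the rest of the work. First I would apply \lref{lem:q_abc}: since $a_j^2 + b_j^2 + c_j^2 = 1$, \eref{eq:QPsinorm2} gives
\begin{equation}\label{eq:etaTermwise}
	\eta(a_j,b_j,c_j,\phi_j) \le \eta(a_j,b_j,c_j) = 15a_j^2 + 10b_j^2 + 5c_j^2 + f\bigl(a_j^2 + c_j^2,\, a_j^2 - c_j^2\bigr).
\end{equation}
The pair $(x_j,y_j) := (a_j^2 + c_j^2,\, a_j^2 - c_j^2)$ lies in the domain $0 \le x_j \le 1$, $-x_j \le y_j \le x_j$ of \lref{lem:fxyUB} by nonnegativity of $a_j,c_j$ together with the normalization, so \lref{lem:fxyUB} yields $f(x_j,y_j) \le \tfrac{5}{4}\sqrt{22} + 8\sqrt{2/11}\,(a_j^2 - c_j^2)$. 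Substituting this and the elementary identity $15a_j^2 + 10b_j^2 + 5c_j^2 = 10 + 5(a_j^2 - c_j^2)$ into \eref{eq:etaTermwise} gives the clean term-wise bound $\eta(a_j,b_j,c_j,\phi_j) \le 10 + \tfrac{5}{4}\sqrt{22} + \bigl(5 + 8\sqrt{2/11}\,\bigr)(a_j^2 - c_j^2)$ for each $j$.

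Next I would multiply this bound by $w_j > 0$ and sum over $j$. Using $\sum_j w_j = 6$ and the constraint $\sum_j w_j(a_j^2 - c_j^2) = \sum_j w_j a_j^2 - \sum_j w_j c_j^2 = 0$, the $a_j^2 - c_j^2$ contribution cancels identically — this is exactly where the hypothesis $\sum_j w_j a_j^2 = \sum_j w_j c_j^2$ is indispensable, since that term is otherwise uncontrolled — leaving $\sum_j w_j\, \eta(a_j,b_j,c_j,\phi_j) \le 6\bigl(10 + \tfrac{5}{4}\sqrt{22}\bigr) = 60 + \tfrac{15}{2}\sqrt{22}$, which is \eref{eq:etaSumUB}.

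For the saturation clause I would argue that, since every $w_j > 0$, the summed inequality is tight only if the term-wise bound is tight for every $j$, hence only if equality holds both in \lref{lem:fxyUB} and in the inequality $\eta(a_j,b_j,c_j,\phi_j)\le\eta(a_j,b_j,c_j)$ of \eref{eq:etaTermwise}. Equality in \lref{lem:fxyUB} forces $a_j^2 + c_j^2 = 9/16$ and $a_j^2 - c_j^2 = 0$, i.e.\ $a_j = c_j = 3\sqrt{2}/8$ and $b_j = \sqrt{7}/4$; in particular $a_j,b_j,c_j$ are all strictly positive. Since $\eta(a_j,b_j,c_j,\phi_j)$ depends on $\phi_j$ only through $|3a_j + 2c_j\rme^{\rmi\phi_j}|$, which for $a_j,c_j>0$ is strictly maximized at $\phi_j=0$, the second equality then forces $\phi_j = 0$ (using $0\le\phi_j<2\pi$). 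Conversely, for $(a_j,b_j,c_j,\phi_j) = (3\sqrt{2}/8,\sqrt{7}/4,3\sqrt{2}/8,0)$ for all $j$ every inequality used above is an equality and the constraint $\sum_j w_j a_j^2 = \sum_j w_j c_j^2$ holds automatically, so the bound is attained. I do not anticipate a genuine obstacle: the statement is a direct composition of \lsref{lem:q_abc} and \ref{lem:fxyUB} with two linear constraints, and the only delicate points are verifying that $(x_j,y_j)$ remains in the domain of \lref{lem:fxyUB} and, in the equality analysis, observing that $a_j,b_j,c_j\neq 0$ at the extremal configuration so that saturating $\eta(a,b,c,\phi)\le\eta(a,b,c)$ really does pin down $\phi_j$ rather than being vacuous.
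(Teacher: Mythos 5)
Your proposal is correct and follows essentially the same route as the paper's proof: term-wise application of \lref{lem:q_abc} and \lref{lem:fxyUB}, cancellation of the $a_j^2-c_j^2$ contribution via the constraint $\sum_j w_j a_j^2=\sum_j w_j c_j^2$, and the same two-stage equality analysis pinning down first $(a_j,b_j,c_j)$ and then $\phi_j$. Your explicit remark that $b_j,a_j,c_j>0$ at the extremal point is needed so that saturating $\eta(a,b,c,\phi)\le\eta(a,b,c)$ genuinely forces $\phi_j=0$ is a nice touch the paper leaves implicit.
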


\begin{proof}
	Let 
	\begin{align} 
		p_j =\frac{w_j}{6},\quad x_j = a_j^2+c_j^2,\quad y_j = a_j^2-c_j^2;
	\end{align}
	then  $\sum_j p_j=1$ and $\sum_j p_jy_j=0$ by assumption. By virtue of  \lsref{lem:q_abc} and \ref{lem:fxyUB} we can deduce that
	\begin{equation}\label{eq:bsProof}	
		\begin{aligned}
			\sum_j w_j \eta(a_j,b_j,c_j,\phi_j)&\leq \sum_j w_j \eta(a_j,b_j,c_j) =6\sum_{j} p_j[10 + 5y_j + f(x_j,y_j)]=60+6\sum_{j} p_jf(x_j,y_j), \\
			\sum_{j} p_jf(x_j,y_j) &\le \sum_j p_j \left(\frac{5}{2}\sqrt{\frac{11}{2}} + 8\sqrt{\frac{2}{11}}\, y_j\right) = \frac{5}{4}\sqrt{22}, 	
		\end{aligned}
	\end{equation}
	which imply \eref{eq:etaSumUB}. 
	
	If $(a_j,b_j,c_j,\phi_j) = (3\sqrt{2}/8,\sqrt{7}/4,3\sqrt{2}/8,0)$ for all $j$, then 
	\begin{align}
		\eta(a_j,b_j,c_j,\phi_j)=10+\frac{5\sqrt{22}}{4}\quad \forall \, j
	\end{align}
	according to the definition in \eref{eq:etaabcphiDef}, so the inequality in \eref{eq:etaSumUB} is saturated given that $\sum_j w_j=6$ by assumption. Conversely, if the equality in
	\eref{eq:etaSumUB} is saturated, then the two inequalities in \eref{eq:bsProof} are saturated. According to \lref{lem:fxyUB}, the saturation of the second inequality  in \eref{eq:bsProof} implies that $(x_j,y_j)=(9/16,0)$ for all $j$, that is, $(a_j, b_j,c_j)=(3\sqrt{2}/8,\sqrt{7}/4,3\sqrt{2}/8)$ for all $j$. Now, according to the definitions in \eqsref{eq:etaabcphiDef}{eq:etaabcDef}, the saturation of the first inequality  in \eref{eq:bsProof} implies that $\phi_j=0$ for all $j$. Therefore, the inequality in \eref{eq:etaSumUB} is saturated iff $(a_j,b_j,c_j,\phi_j) = (3\sqrt{2}/8,\sqrt{7}/4,3\sqrt{2}/8,0)$ for all $j$, which completes the proof of \lref{lem:etaSumUB}.
\end{proof}

\subsection{Proof of \lref{lem:bisep}}
To start with, we first assume that $\scrM$ is a rank-1 POVM; then $M_j$ can be expressed as $M_j=w_j|\Psi_j\>\<\Psi_j|$ with $|\Psi_j\>\in \Sym_2(\caH)\otimes \caH$, $0<w_j\leq 1$, and $\sum_j w_j=6$.  In addition, 
each $|\Psi_j\>$ can be expressed as follows: 
\begin{align}
	|\Psi_j\>=U_j^{\otimes 3} \bigl[\bigl(a_j|00\> + b_j|\bbS\> + c_j\rme^{\rmi \phi_j}|11\>\bigr)\otimes |0\>\bigr],
\end{align}
where $U_j \in \rmU(\caH)$, 
$a_j, b_j,c_j\geq0$, $a_j^2+b_j^2+c_j^2=1$, and $0\leq\phi_j<2\pi$. By virtue of \lref{lem:q_abc} we can deduce that 
\begin{align}
	F(\scrM)&=\frac{1}{120}\sum_j\|\caQ(w_j|\Psi_j\>\<\Psi_j|)\|=
	\frac{1}{120}\sum_j w_j \eta(a_j,b_j,c_j,\phi_j),\\
	4=\sum_j w_j\tr\left(P_3 |\Psi_j\>\<\Psi_j|\right)&=\sum_j \frac{w_j\bigl(3a_j^2+2b_j^2+c_j^2\bigr)}{3}=\sum_j\frac{w_j\bigl(2+a_j^2-c_j^2\bigr)}{3}=4+\sum_j \frac{w_j\bigl(a_j^2-c_j^2\bigr)}{3},
\end{align}
which means $\sum_j w_j a_j^2=\sum_j w_j c_j^2$. In conjunction with \lref{lem:etaSumUB} we can deduce that
\begin{equation}\label{eq:constbound}
	F(\scrM) = \frac{1}{120}\sum_j w_j\eta(a_j,b_j,c_j,\phi_j) \le \frac{1}{2} + \frac{\sqrt{22}}{16}=F_\bs,
\end{equation}
which confirms the inequality in \eref{eq:bisepSymUB}.

Next, we clarify the conditions under which the inequality in \eref{eq:bisepSymUB} is saturated. If each normalized POVM element $M_j/\tr(M_j)=|\Psi_j\>\<\Psi_j|$ is equivalent to $|\tPsi\>\<\tPsi|$ under a symmetric local unitary transformation, then, according to  \eref{eq:EstimationFidDef}, we have
\begin{align}
	F(\scrM)=\frac{1}{120}\sum_j \|\caQ(M_j)\|=\frac{1}{120}\sum_j w_j \|\caQ(|\Psi_j\>\<\Psi_j|)\|=\frac{1}{20} \|\caQ(|\tPsi\>\<\tPsi|)\|=\frac{1}{2} + \frac{\sqrt{22}}{16},
\end{align}
where the last equality can be verified by straightforward calculation. In this case,
the inequality in \eref{eq:bisepSymUB} is indeed saturated. Conversely, if the inequality in \eref{eq:bisepSymUB} is  saturated, that is, $F(\scrM) = 1/2 + \sqrt{22}/16$, then the inequality in \eref{eq:constbound} is saturated. 
By virtue of \lref{lem:etaSumUB}  we can deduce that  $(a_j, b_j,c_j,\phi_j)=(3\sqrt{2}/8,\sqrt{7}/4,3\sqrt{2}/8,0)$ for all $j$, which implies that
\begin{equation}
	|\Psi_j\rangle = U_j^{\otimes 3} \left[\left(\frac{3\sqrt{2}}{8}|00\rangle + \frac{\sqrt{7}}{4}|\bbS\rangle + \frac{3\sqrt{2}}{8}|11\rangle\right)\otimes|0\rangle\right] = (U_j H)^{\otimes 3}|\tPsi\> \quad \forall \, j. 
\end{equation}
Therefore, each $M_j/\tr(M_j)=|\Psi_j\>\<\Psi_j|$ is equivalent to $|\tPsi\>\<\tPsi|$ under a symmetric local unitary transformation. 

Next, suppose $\scrM$ is not a rank-1 POVM. Then $\scrM$ is a coarse graining of a rank-1 POVM, so the inequality in \eref{eq:bisepSymUB} still holds given that coarse graining cannot increase the estimation fidelity \cite{Zhu22}. In addition, $\scrM$ has at least one POVM element, say $M_1$, that has  rank at least 2. Consequently, the support of $M_1$ contains at least one pure state that is not equivalent to  $|\tPsi\>\<\tPsi|$ under  symmetric local unitary transformations. Therefore, $\scrM$ can be expressed as a coarse graining of a rank-1 POVM $\tscrM$ at least one POVM element of which is not  equivalent to $|\tPsi\>\<\tPsi|$ under  symmetric local unitary transformations, which means
$F(\scrM)\leq F(\tscrM)<F_\bs$. In other words, the inequality in \eref{eq:bisepSymUB} cannot be saturated whenever $\scrM$ is not a rank-1 POVM. This observation completes the proof of \lref{lem:bisep}.

\subsection{Proof of \thref{thm:bisep}}

By assumption $\scrM$  is a coarse graining of a POVM of the form 
\begin{align}
	\scrK=p_3\scrK_{(\rmA\rmB|\rmC)} \sqcup p_2\scrK_{(\rmA\rmC|\rmB)}\sqcup p_1\scrK_{(\rmB\rmC|\rmA)},
\end{align}
where $p_1, p_2, p_3\geq 0$, $p_1+p_2+p_3=1$, and the three POVMs $\scrK_{(\rmA\rmB|\rmC)}$, $\scrK_{(\rmA\rmC|\rmB)}$, and  $\scrK_{(\rmB\rmC|\rmA)}$ are  $(\rmA\rmB|\rmC)$ separable, $(\rmA\rmC|\rmB)$ separable, and $(\rmB\rmC|\rmA)$ separable, respectively. Therefore, 
\begin{align}
	F(\scrM)\leq F(\scrK)= p_3 F(\scrK_{(\rmA\rmB|\rmC)}) + p_2F(\scrK_{(\rmA\rmC|\rmB)})+ p_1F(\scrK_{(\rmB\rmC|\rmA)})\leq F_\bs=\frac{1}{2} + \frac{\sqrt{22}}{16}. 
\end{align}
Here the first inequality holds because coarse graining cannot increase the estimation fidelity \cite{Zhu22}, and  the second inequality follows from 
\lsref{lem:POVMsym} and \ref{lem:bisep}. 

If in addition $\scrM = \{M_j\}_j$ is $(\rmA \rmB|\rmC)$ separable, then  $(P_2\otimes \bbone)\scrM(P_2\otimes \bbone)$ is a separable POVM on $\Sym_2(\caH)\otimes \caH$. According to \lref{lem:bisep},  the maximum estimation fidelity $F_\bs$ can be attained iff $(P_2\otimes \bbone)M_j(P_2\otimes \bbone)$ for each $j$ is proportional to a quantum state that is equivalent to $|\tPsi\>\<\tPsi|$ under a symmetric local unitary transformation, which completes the proof of \thref{thm:bisep}. 

\section{Direct proof of \eref{eq:2+1}}
\label{app:2+1}

\setcounter{lemma}{0} 

\subsection{Auxiliary results}

\begin{lemma}\label{lem:Sym2CanonicalForm}
	Suppose $|\Phi\>\in \Sym_2(\caH)$. Then there exists $U\in \rmU(\caH)$ and $\xi \in [0,\pi/2]$ such that
	\begin{equation}\label{eq:Sym2CanonicalForm}
		U^{\otimes 2}|\Phi\> =\cos\frac{\xi}{2}|00\> + \sin\frac{\xi}{2}|11\>.
	\end{equation}
\end{lemma}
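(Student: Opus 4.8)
The plan is to recognize \lref{lem:Sym2CanonicalForm} as the two-qubit instance of the symmetric Schmidt decomposition and to derive it from the Autonne--Takagi factorization of a complex symmetric matrix. First I would identify $|\Phi\>$ with its coefficient matrix: writing $|\Phi\> = \sum_{i,j=0}^{1}\Phi_{ij}\,|i\>\otimes|j\>$, the matrix $\Phi = (\Phi_{ij})$ is $2\times 2$ and complex, and the hypothesis $|\Phi\>\in \Sym_2(\caH)$ is equivalent to $\Phi^T = \Phi$. The key bookkeeping step is the transformation law under a symmetric local unitary: a short computation gives $U^{\otimes 2}|\Phi\> = \sum_{k,l}(U\Phi U^T)_{kl}\,|k\>\otimes|l\>$, so applying $U^{\otimes 2}$ amounts to $\Phi\mapsto U\Phi U^T$, which preserves symmetry.

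Next I would invoke the Autonne--Takagi factorization: for any complex symmetric matrix $\Phi$ there exist a unitary $V$ and a diagonal matrix $\Sigma=\operatorname{diag}(s_0,s_1)$ with $s_0,s_1\ge 0$ such that $\Phi = V\Sigma V^T$. Taking $U := V^\dag$ and using the identity $V^T\overline{V} = \bbone$ (which is just the entrywise conjugate of $V^\dag V = \bbone$), one finds $U\Phi U^T = \Sigma$, that is, $U^{\otimes 2}|\Phi\> = s_0|00\> + s_1|11\>$. Since $|\Phi\>$ is a unit vector, $s_0^2 + s_1^2 = 1$.

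Finally I would adjust the ordering of the Schmidt coefficients so that the angle lands in the claimed range. If $s_0\ge s_1$, set $\xi := 2\arccos s_0\in[0,\pi/2]$, so that $(s_0,s_1) = (\cos(\xi/2),\sin(\xi/2))$ and \eref{eq:Sym2CanonicalForm} holds. If instead $s_0 < s_1$, replace $U$ by $XU$; since $X = X^T$, the coefficient matrix becomes $X\Sigma X = \operatorname{diag}(s_1,s_0)$, and one argues as before with the roles of $s_0$ and $s_1$ swapped. I do not expect any genuine obstacle here: essentially everything is a direct calculation, and the only two points that deserve a line of verification are the transformation rule $\Phi\mapsto U\Phi U^T$ and the identity $V^T\overline V = \bbone$ used to convert the Takagi factorization into the desired local-unitary statement.
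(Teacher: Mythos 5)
Your proof is correct, but it takes a genuinely different route from the paper's. You package the whole statement as the two-qubit Autonne--Takagi factorization: identify $|\Phi\>$ with its symmetric coefficient matrix $\Phi$, note that $U^{\otimes 2}$ acts as $\Phi\mapsto U\Phi U^{T}$, and read off the diagonal form from $\Phi=V\Sigma V^{T}$; the ordering step via $XU$ and the check that $s_0\ge 1/\sqrt{2}$ forces $\xi\in[0,\pi/2]$ are both handled correctly. The paper instead gives a fully elementary, self-contained construction: it first writes $|\Phi\>=W^{\dag\otimes 2}\bigl(a|00\>+b|\bbS\>+c\,\rme^{\rmi\chi}|11\>\bigr)$ with $a,b,c\ge 0$, then applies an explicit one-parameter family of unitaries $U_1(\theta,\phi)$ and solves two real trigonometric equations to annihilate the coefficient of $|\bbS\>$, and finally uses a diagonal unitary to strip phases and order the remaining coefficients. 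Your approach buys brevity, conceptual clarity, and immediate generalization to $\Sym_2(\caH)$ for $\dim\caH>2$, at the cost of importing a nontrivial matrix factorization as a black box; the paper's approach is longer but requires nothing beyond direct computation and explicitly exhibits the unitary $U=U_2U_1(\theta_0,\phi_0)W$, which is in the spirit of the rest of its appendices. Either argument establishes \lref{lem:Sym2CanonicalForm}.
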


\begin{proof}
	By assumption $|\Phi\>$ can be expressed as follows:
	\begin{equation}
		|\Phi\> = W^{\dag\otimes 2}(a|00\> + b|\bbS\> + c \rme^{\rmi\chi}|11\>),\quad a,b,c \ge 0,\ a^2 + b^2 + c^2=1, \ \chi \in [0,2\pi),
	\end{equation}
	where $W\in \rmU(\caH)$.
	Consider a unitary operator of the form 
	\begin{equation}
		U_1(\theta,\phi) = \begin{pmatrix}
			\cos\theta & \sin\theta \rme^{\rmi\phi} \\
			-\sin\theta \rme^{-\rmi\phi} & \cos\theta
		\end{pmatrix}.
	\end{equation}
    Apply $[U_1(\theta,\phi)W]^{\otimes 2}$ on $|\Phi\>$ yields
	\begin{equation}
		[U_1(\theta,\phi)W]^{\otimes 2}|\Phi\> = u(\theta,\phi)|00\> + v(\theta,\phi)|\bbS\> + w(\theta,\phi)|11\>,\\
	\end{equation}
	where
	\begin{equation}
		\begin{aligned}
			u(\theta,\phi) &= a\cos^2\theta + \frac{1}{\sqrt{2}}b\rme^{\rmi\phi} \sin 2\theta + c\rme^{2\rmi(\phi + \chi)}\sin^2\theta,\\
			v(\theta,\phi) &= b\cos(2\theta) + \frac{1}{\sqrt{2}}\bigl[c\rme^{\rmi(\phi + \chi)} - a\rme^{-\rmi\phi}\bigr]\sin(2\theta)\\
			&=b \cos(2\theta) + \frac{1}{\sqrt{2}}\left[c \cos(\phi + \chi) - a \cos(\phi)\right] \sin(2\theta) + \frac{\rmi}{\sqrt{2}}\left[c \sin(\phi + \chi) + a \sin(\phi)\right] \sin(2\theta),\\
			w(\theta,\phi) &= c\rme^{\rmi\chi}\cos^2\theta - \frac{1}{\sqrt{2}}b\rme^{-\rmi\phi}\sin(2\theta) + a\rme^{-\rmi2\phi}\sin^2\theta.
		\end{aligned}
	\end{equation}
	Let $\phi_0$ be a solution of the equation
	\begin{equation}
		c \sin(\phi_0 + \chi) + a \sin(\phi_0) = 0,
	\end{equation}
	and $\theta_0$ a solution of the equation
	\begin{equation}
		b \cos(2\theta_0) + \frac{1}{\sqrt{2}}\left[c \cos(\phi_0 + \chi) - a \cos\phi_0\right] \sin(2\theta_0) = 0.
	\end{equation}
	Then $v(\theta_0,\phi_0)=0$ and
	\begin{equation}
		[U_1(\theta_0,\phi_0)W]^{\otimes 2}|\Phi\> = u(\theta_0,\phi_0) |00\> + w(\theta_0,\phi_0) |11\>.
	\end{equation}
	Now it is easy to find a diagonal unitary operator $U_2$ (with respect to the computational basis) such that $U_2^{\otimes 2}[u(\theta_0,\phi_0) |00\> + w(\theta_0,\phi_0) |11\>]=\cos(\xi/2)|00\> + \sin(\xi/2)|11\>$ with $\xi \in [0,\pi/2]$. Let $U=U_2 U_1(\theta_0,\phi_0) W$, then \eref{eq:Sym2CanonicalForm} holds, which completes the proof of \lref{lem:Sym2CanonicalForm}.
\end{proof}

\begin{lemma}\label{lem:proofofthm2+1}
	Suppose $|\Phi\>\in \Sym_2(\caH)$ and $\scrM = \{M_j\}_j$ is a POVM on $\caH$. Then 
	\begin{equation}\label{eq:PsiPOVM2+1}
		\sum_j \|\caQ(|\Phi\>\<\Phi|\otimes M_j)\| \le 20 + \frac{5\sqrt{22}}{2},
	\end{equation}
	and the upper bound is saturated when    
	\begin{equation}\label{eq:PsiPOVM2+1Saturate}
		|\Phi\>= \cos\frac{\xi_0}{2}|00\> + \sin\frac{\xi_0}{2}|11\>,\quad \xi_0 := \arcsin\left(1/8\right),\quad  \scrM = \{|+\>\< +|, |-\>\< -|\}. 
	\end{equation}
\end{lemma}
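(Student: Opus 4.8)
The plan is to reduce everything to the single-copy calculation packaged in \lref{lem:q_abc} and the optimization already performed in \lref{lem:etaSumUB}. Since $\caQ$ is linear and the spectral norm is subadditive, replacing each $M_j$ by its (at most two) rank-one spectral components can only increase $\sum_j\|\caQ(|\Phi\>\<\Phi|\otimes M_j)\|$, so it suffices to prove \eref{eq:PsiPOVM2+1} when $\scrM=\{M_j\}_j=\{w_j|\varphi_j\>\<\varphi_j|\}_j$ is a rank-one POVM on $\caH$; then $w_j>0$, $\sum_j w_j|\varphi_j\>\<\varphi_j|=\bbone$, and hence $\sum_j w_j=2$. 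For each $j$ I would choose $V_j\in\rmU(\caH)$ with $V_j|\varphi_j\>=|0\>$ and write $V_j^{\otimes 2}|\Phi\>=a_j|00\>+b_j|\bbS\>+c_j\rme^{\rmi\phi_j}|11\>$ with $a_j,b_j,c_j\ge 0$ and $0\le\phi_j<2\pi$ (so $a_j^2+b_j^2+c_j^2=1$). Because $V_j^{\otimes 3}$ is a symmetric local unitary and $\|\caQ(\cdot)\|$ is invariant under such transformations, \eref{eq:QPsinorm2} gives
\[
\|\caQ(|\Phi\>\<\Phi|\otimes M_j)\|=w_j\,\|\caQ(|\Psi_j\>\<\Psi_j|)\|=w_j\,\eta(a_j,b_j,c_j,\phi_j),\qquad |\Psi_j\>:=(V_j^{\otimes 2}|\Phi\>)\otimes|0\>.
\]

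The crucial step is the balance condition $\sum_j w_j a_j^2=\sum_j w_j c_j^2$ needed to invoke \lref{lem:etaSumUB}. Writing $a_j^2=|\<\varphi_j\varphi_j|\Phi\>|^2=\tr[(|\varphi_j\>\<\varphi_j|)^{\otimes 2}|\Phi\>\<\Phi|]$ and likewise $c_j^2=\tr[(|\varphi_j^\perp\>\<\varphi_j^\perp|)^{\otimes 2}|\Phi\>\<\Phi|]$ with $|\varphi_j^\perp\>:=V_j^\dag|1\>$, the Bloch decompositions $|\varphi_j\>\<\varphi_j|=\tfrac12(\bbone+\vec n_j\cdot\vec\sigma)$, $|\varphi_j^\perp\>\<\varphi_j^\perp|=\tfrac12(\bbone-\vec n_j\cdot\vec\sigma)$ give
\[
(|\varphi_j\>\<\varphi_j|)^{\otimes 2}-(|\varphi_j^\perp\>\<\varphi_j^\perp|)^{\otimes 2}=\tfrac12\bigl[(\vec n_j\cdot\vec\sigma)\otimes\bbone+\bbone\otimes(\vec n_j\cdot\vec\sigma)\bigr],
\]
whence $a_j^2-c_j^2=\vec n_j\cdot\vec r$, where $\vec r$ is the Bloch vector of the (well-defined) single-qubit reduced state of $|\Phi\>\in\Sym_2(\caH)$. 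Since $\sum_j w_j|\varphi_j\>\<\varphi_j|=\bbone$ forces $\sum_j w_j\vec n_j=0$, we get $\sum_j w_j(a_j^2-c_j^2)=\vec r\cdot\sum_j w_j\vec n_j=0$. Rescaling to $\tilde w_j:=3w_j$ (so $\sum_j\tilde w_j=6$ while the balance condition persists), \lref{lem:etaSumUB} yields $\sum_j\tilde w_j\,\eta(a_j,b_j,c_j,\phi_j)\le 60+\tfrac{15}{2}\sqrt{22}$, i.e.\ $\sum_j\|\caQ(|\Phi\>\<\Phi|\otimes M_j)\|=\sum_j w_j\,\eta(a_j,b_j,c_j,\phi_j)\le 20+\tfrac{5}{2}\sqrt{22}$, which is \eref{eq:PsiPOVM2+1}.

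For the equality statement \eref{eq:PsiPOVM2+1Saturate} I would simply verify the stated configuration. There $\scrM=\{|+\>\<+|,|-\>\<-|\}$, so $w_1=w_2=1$; taking $V=H$ for the outcome $|+\>$ and $V=HZ$ for the outcome $|-\>$ (using $Z^{\otimes 2}|\Phi\>=|\Phi\>$ in the latter case), both give
\[
V^{\otimes 2}|\Phi\>=\frac{\cos(\xi_0/2)+\sin(\xi_0/2)}{2}\bigl(|00\>+|11\>\bigr)+\frac{\cos(\xi_0/2)-\sin(\xi_0/2)}{\sqrt2}\,|\bbS\>,
\]
and $\sin\xi_0=1/8$ turns this into $(a,b,c,\phi)=(3\sqrt2/8,\sqrt7/4,3\sqrt2/8,0)$. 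The definition of $\eta$ in \eref{eq:etaabcphiDef} then gives $\eta=10+\tfrac54\sqrt{22}$ for each outcome, so $\sum_j\|\caQ(|\Phi\>\<\Phi|\otimes M_j)\|=2(10+\tfrac54\sqrt{22})=20+\tfrac52\sqrt{22}$, matching the bound. (This is precisely the equality case of \lref{lem:etaSumUB}.)

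The two reductions above are routine; the one genuinely nontrivial ingredient is the balance identity $\sum_j w_j(a_j^2-c_j^2)=0$. The POVM completeness relation $\sum_j w_j|\varphi_j\>\<\varphi_j|=\bbone$ is linear in the projectors, while $a_j^2$ and $c_j^2$ depend quadratically on $|\varphi_j\>$, so at first sight completeness should not control them; what makes it work is that the \emph{difference} $(|\varphi_j\>\<\varphi_j|)^{\otimes 2}-(|\varphi_j^\perp\>\<\varphi_j^\perp|)^{\otimes 2}$ is linear in the Bloch vector $\vec n_j$, so the quadratic terms cancel. A minor bookkeeping point is the normalization mismatch ($\sum_j w_j=2$ here versus $\sum_j w_j=6$ in \lref{lem:etaSumUB}), handled by $\tilde w_j=3w_j$. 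One could alternatively first use \lref{lem:Sym2CanonicalForm} to put $|\Phi\>$ in the form $\cos(\xi/2)|00\>+\sin(\xi/2)|11\>$, which changes neither side of \eref{eq:PsiPOVM2+1} and makes $\vec r=(0,0,\cos\xi)$ manifest, but this is not required.
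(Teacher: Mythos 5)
Your proof is correct, but it takes a genuinely different route from the paper's. The paper's Appendix~B argument is a self-contained trigonometric optimization: it first puts $|\Phi\>$ into the canonical form $\cos(\xi/2)|00\>+\sin(\xi/2)|11\>$ via \lref{lem:Sym2CanonicalForm}, reduces to rank-one projective measurements on the third qubit by citing \rcite{BagaMM05}, computes $\sum_j\|\caQ(\cdot)\|$ explicitly as $20+q(\xi,\theta)$ after optimizing the azimuthal phase, and then bounds $q(\xi,\theta)\le\tfrac{5}{2}\sqrt{22}$ by locating its critical points. You instead recycle the Appendix~A machinery: after splitting each $M_j$ into rank-one pieces (linearity of $\caQ$ plus the triangle inequality for the spectral norm) and rotating each outcome onto $|0\>$, you land exactly in the setting of \lref{lem:etaSumUB}. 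The one genuinely new ingredient you need---and supply correctly---is the balance condition $\sum_j w_j\bigl(a_j^2-c_j^2\bigr)=0$, derived from POVM completeness via the Bloch-vector identity $(|\varphi_j\>\<\varphi_j|)^{\otimes 2}-(|\varphi_j^\perp\>\<\varphi_j^\perp|)^{\otimes 2}=\tfrac12\bigl[(\vec n_j\cdot\vec\sigma)\otimes\bbone+\bbone\otimes(\vec n_j\cdot\vec\sigma)\bigr]$ and the symmetry of $|\Phi\>$. This point is not cosmetic: the collection $\{|\Phi\>\<\Phi|\otimes M_j\}_j$ sums to $|\Phi\>\<\Phi|\otimes\bbone$ rather than $P_2\otimes\bbone$, so it is not a POVM on $\Sym_2(\caH)\otimes\caH$ and \lref{lem:bisep} does not apply directly; your local balance identity is exactly what makes \lref{lem:etaSumUB} applicable to this single block. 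Your equality check at $\xi_0=\arcsin(1/8)$ with $\scrM=\{|+\>\<+|,|-\>\<-|\}$ correctly reproduces $(a,b,c,\phi)=(3\sqrt2/8,\sqrt7/4,3\sqrt2/8,0)$ and $\eta=10+\tfrac54\sqrt{22}$ per outcome. As for what each approach buys: yours is shorter, avoids the external reduction to projective measurements, and bypasses the calculus on $q(\xi,\theta)$, at the price of depending on \lref{lem:etaSumUB}; the paper's computation is independent of Appendix~A (which is the point of offering a ``direct'' proof of \eref{eq:2+1} alongside its derivation from \thref{thm:bisep}) and exhibits the optimal geometry in the $(\xi,\theta)$ coordinates explicitly.
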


\begin{proof}
	Thanks to \lref{lem:Sym2CanonicalForm}, we can assume that $|\Phi\>$ has the form $|\Phi\> = \cos(\xi/2)|00\> + \sin(\xi/2)|11\>$ with $\xi \in [0,\pi/2]$ without loss of generality. According to \rcite{BagaMM05}, we can further assume that $\scrM$ is a rank-1 projective measurement that has the form $\scrM = \{|\varphi_+\>\<\varphi_+|,|\varphi_-\>\<\varphi_-|\}$, where 
	\begin{equation}
		|\varphi_+\> = \cos\frac{\theta}{2}|0\> + \sin\frac{\theta}{2} \rme^{\rmi \phi}|1\>,\quad |\varphi_-\> = \sin\frac{\theta}{2}|0\> - \cos\frac{\theta}{2} \rme^{\rmi \phi}|1\>,\quad \theta\in[0,\pi],\; \phi\in[0,2\pi). 
	\end{equation}
	Then
	\begin{align}
		\sum_j \|\caQ(|\Phi\>\<\Phi|\otimes M_j)\| & = \|\caQ(|\Phi\>\<\Phi|\otimes|\varphi_+\>\<\varphi_+|)\| + \|\caQ(|\Phi\>\<\Phi|\otimes|\varphi_-\>\<\varphi_-|)\| \nonumber\\
		&= 20 + \sqrt{\sin^2\theta(9 + \sin^2\xi + 6\sin\xi \cos 2\phi) + (5\cos\xi + 4\cos\theta)^2}\nonumber \\
		&\equad + \sqrt{\sin^2\theta(9 + \sin^2\xi + 6\sin\xi \cos 2\phi) + (5\cos\xi - 4\cos\theta)^2}\nonumber \\
		&\le 20+q(\xi,\theta),  \label{eq:PsiPOVM2+1Proof}
	\end{align}
	where 
	\begin{equation}
		q(\xi,\theta) := \sqrt{h_+}+\sqrt{h_-},\quad h_\pm := \sin^2\theta(3 + \sin\xi)^2 + (5\cos\xi \pm 4\cos\theta)^2,
	\end{equation}
	and the inequality above is saturated when $\phi = 0$. To prove \eref{eq:PsiPOVM2+1}, it suffices to prove the following inequality:
	\begin{align}\label{eq:qxithetaUB}
		q(\xi,\theta)\leq\frac{5\sqrt{22}}{2}.
	\end{align}
	If $\theta=0$ or $\theta=\pi$, then 
	\begin{align}
		q(\xi,\theta) &=  |5 \cos \xi + 4| + |5 \cos \xi - 4| \le 10<\frac{5\sqrt{22}}{2}, \label{eq:qxithetaUB1}
	\end{align}
	which confirms \eref{eq:qxithetaUB}.

	Next, suppose $0<\theta<\pi$, then $\sin\theta>0$ and  $h_\pm>0$. 
	To determine the extremal points of $q(\xi,\theta)$, we can evaluate the partial derivative of  $q(\xi,\theta)$ over $\theta$, with the result
	\begin{gather}
		\frac{\partial q(\xi,\theta)}{\partial \theta} = \sin\theta\left(-\frac{g_+}{\sqrt{h_+}} + \frac{g_-}{\sqrt{h_-}}\right) = \frac{\sin\theta \bigl(g_- \sqrt{h_+} - g_+ \sqrt{h_-}\lsp\bigr)}{\sqrt{h_+ h_-}},\\ 
		g_\pm := 20\cos\xi \mp \cos\theta(\sin^2\xi + 6\sin\xi - 7) .
	\end{gather}
	In addition,
	\begin{equation}
		g^2_-h_+ - g^2_+ h_- = 480\cos\theta \left(\cos\frac{\xi}{2} - \sin\frac{\xi}{2}\right)^3\left(\cos\frac{\xi}{2} + \sin\frac{\xi}{2}\right)(3+\sin\xi)^2 (3+4\sin\xi).
	\end{equation}
	If $\partial q(\xi,\theta)/\partial \theta = 0$, then $g^2_-h_+ - g^2_+ h_-=0$, which means $\cos\theta = 0$ or $\cos(\xi/2)=\sin(\xi/2)$, that is, $\theta=\pi/2$ or  $\xi = \pi/2$,  given that $0<\theta<\pi$ and $0\leq \xi\leq \pi/2$ by assumption. In the latter case, we have
	\begin{equation}\label{eq:qxithetaUB2}
		q(\xi,\theta)= q(\pi/2, \theta) = 8 \quad \forall \, \theta. 
	\end{equation}
	In the former case, we have
	\begin{align}\label{eq:qxithetaUB3}
		q(\xi,\theta) = q(\xi,\pi/2) &= 2\sqrt{(3+\sin\xi)^2 + 25\cos^2\xi} \le  \frac{5\sqrt{22}}{2},
	\end{align}
	where the inequality is saturated when $\xi =\xi_0= \arcsin(1/8)$. In conjunction with \eqsref{eq:qxithetaUB1}{eq:qxithetaUB2}, this observation completes the proof of \eref{eq:qxithetaUB}. 
	
	Now, \eref{eq:PsiPOVM2+1} is a simple corollary of \eqsref{eq:PsiPOVM2+1Proof}{eq:qxithetaUB}. If $|\Phi\>$ and $\scrM$ have the form in \eref{eq:PsiPOVM2+1Saturate}, then the inequalities in \eqsref{eq:PsiPOVM2+1Proof}{eq:qxithetaUB3}
	are saturated, so 
	the upper bound in \eref{eq:PsiPOVM2+1} is  saturated accordingly, which can also be verified by straightforward calculation. 
\end{proof}

\subsection{Direct proof of \eref{eq:2+1}}

Thanks to \lref{lem:POVMsym}, to determine the maximum estimation fidelity of $2+1$ adaptive POVMs on $\caH^{\otimes 3}$, it suffices to consider $2+1$ adaptive rank-1 POVMs on $\Sym_2(\caH)\otimes \caH$.

A general  $2+1$ adaptive rank-1 POVM $\scrM$ on $\Sym_2(\caH)\otimes \caH$ can be expressed as follows:
\begin{align}
	\scrM=\bigsqcup_j w_j|\Phi_j\>\<\Phi_j|\otimes \scrM_j, 
\end{align}
where $\{w_j|\Phi_j\>\<\Phi_j|\}_j$ forms a POVM on $\Sym_2(\caH)$, which means $|\Phi_j\>\in \Sym_2(\caH)$,  $w_j> 0$, and $\sum_j w_j=3$; in addition, each $\scrM_j$ is a POVM on $\caH$. 
By virtue of \eref{eq:EstimationFidDef} and \lref{lem:proofofthm2+1} we can deduce that
\begin{align}
	F(\scrM) &= \frac{1}{120}\sum_{j}\sum_{M\in \scrM_j} \|\caQ(w_j |\Phi_j\>\<\Phi_j|\otimes M)\|
	\le \frac{1}{120}\left(20 + \frac{5}{2}\sqrt{22}\right)\sum_{j}w_j =F_{2\rightarrow 1}= \frac{1}{2} + \frac{\sqrt{22}}{16},
\end{align}
which confirms \eref{eq:2+1}. Incidentally,
an optimal $2+1$ adaptive POVM that can attain the maximum estimation fidelity $F_{2\rightarrow 1}$  is presented in \eref{eq:OptPOVMbs2+1} in the main text.

\section{Proof of \thref{thm:1+2}}
\label{app:1+2}
\setcounter{lemma}{0} 

\subsection{Auxiliary results}
As a complement to \lref{lem:fxyUB},  here we first derive another tight linear upper bound for the function $f(x,y)$ defined in \eref{eq:fxy}. Let
\begin{equation}\label{eq:p0gamma}
	\begin{gathered}
		p := \frac{47 - 3\sqrt{41}}{216},\quad x_0 := \frac{\frac{2}{3} - p}{1 - p}=\frac{2003 + 27 \sqrt{41}}{3524},\quad y_0 := \frac{-p}{1 - p}=\frac{-1039 + 81 \sqrt{41}}{3524},\\
		\alpha := f_x(x_0,y_0) = \frac{5}{2} - \frac{43}{2\sqrt{41}},\;\; \beta := f_y(x_0,y_0) = \frac{9}{2} - \frac{13}{2\sqrt{41}},\;\;
		\gamma := f(x_0, y_0) - \alpha x_0 - \beta y_0 = 2 + \frac{28}{\sqrt{41}}.
	\end{gathered}
\end{equation}
Here $p$ is reproduced from \eref{eq:pSPhiW}, $f_x=\partial f/\partial x$, and $f_y=\partial f/\partial y$. Note that $\alpha$ and $y_0$ are negative, while the other four numbers are positive. 
\begin{lemma}\label{lem:fxyABC}
	Suppose  $0 \le x \le 1$ and $-x \le y \le x$. Then the function $f(x,y)$ defined in \eref{eq:fxy} satisfies
	\begin{equation}\label{eq:fxyABC}
		f(x,y) \le \alpha x + \beta y + \gamma,
	\end{equation}
	where   $\alpha,\beta,\gamma$ are defined in \eref{eq:p0gamma}, and   the inequality is saturated iff $x = y = 1$ or $x = x_0,  y = y_0$.
\end{lemma}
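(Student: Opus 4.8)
The plan is to follow the strategy used for \lref{lem:fxyUB}: remove the outer square root by squaring, isolate the remaining square root $\sqrt{x^2-y^2}$, and reduce \eref{eq:fxyABC} to a polynomial inequality that can be settled by an explicit factorization.

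First I would check that the affine function $\ell(x,y):=\alpha x+\beta y+\gamma$ is strictly positive on the triangle $\Delta:=\{(x,y):0\le x\le1,\ -x\le y\le x\}$; since $\ell$ is affine it suffices to evaluate it at the three vertices $(0,0)$, $(1,1)$, $(1,-1)$, using the identity $\alpha+\beta+\gamma=9$ (so $\ell(1,1)=9$, which already signals equality there). Because $\ell>0$ and $f\ge0$, \eref{eq:fxyABC} is equivalent to $\ell^2-f^2\ge0$. Expanding $(3\sqrt{x+y}+2\sqrt{x-y}\,)^2=13x+5y+12\sqrt{x^2-y^2}$ gives $f(x,y)^2=4(1-x)\bigl(13x+5y+12\sqrt{x^2-y^2}\,\bigr)+(2x+5y+2)^2$, so that $\ell^2-f^2=P(x,y)-48(1-x)\sqrt{x^2-y^2}$, where $P(x,y):=\ell(x,y)^2-52x(1-x)-20y(1-x)-(2x+5y+2)^2$ is an explicit quadratic polynomial.

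Next I would split $\Delta$ into the edge $x=1$ and the interior $x<1$. On $x=1$ the square-root term vanishes and $f(1,y)=|5y+4|$, $\ell(1,y)=9-\beta(1-y)$, so $\ell^2-f^2$ factors into manifestly nonnegative pieces (using $\beta<4$), vanishing iff $y=1$. For $x<1$ one has $1-x>0$; here I would first verify $P(x,y)\ge0$ on $\Delta$ — this is where the constraints $|y|\le x\le1$ are essential, since $P$ is an indefinite quadratic form — and then square once more to reduce the claim to the degree-four polynomial inequality $\Pi(x,y):=P(x,y)^2-2304(1-x)^2(x^2-y^2)\ge0$. The change of variables $u:=x+y$, $v:=x-y$ (so that $x^2-y^2=uv$, $1-x=1-(u+v)/2$, and $\Delta$ becomes $u,v\ge0$, $u+v\le2$) should make $\Pi$ more tractable.

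Finally I would prove $\Pi\ge0$ together with the equality characterization. By the very choice of $\alpha,\beta,\gamma$ in \eref{eq:p0gamma}, $\ell$ is the tangent plane of $f$ at $(x_0,y_0)$, so $\ell-f$ vanishes to second order there and $\Pi$ has a double zero at $(x_0,y_0)$; moreover $\Pi(1,1)=P(1,1)^2=\bigl(\ell(1,1)^2-81\bigr)^2=0$. The remaining task is to display a factorization of $\Pi$ — expected to be a perfect square times a quadratic that is nonnegative on $\Delta$ — whose only zeros on $\Delta$ are $(1,1)$ and $(x_0,y_0)$, and then to pull the equality conditions back through $x_0=(2/3-p)/(1-p)$, $y_0=-p/(1-p)$. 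I expect this factorization, together with the verification that $P\ge0$ on $\Delta$, to be the main obstacle: the coefficients are $\sqrt{41}$-valued, so one likely has to clear denominators (multiplying by $41$ or $41^2$) and may need a further case split on a threshold value of $x$, exactly as in \lref{lem:fxyUB}. One should also inspect the degenerate edges $y=\pm x$ (where $\sqrt{x^2-y^2}=0$) to rule out spurious equality cases.
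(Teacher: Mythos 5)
Your setup is sound up to the point where the real work begins, but the proposal stops there: the two steps that actually constitute the proof --- verifying $P(x,y)\ge 0$ on the triangle and establishing $\Pi(x,y)=P(x,y)^2-2304(1-x)^2(x^2-y^2)\ge 0$ with the correct equality set --- are not carried out but only announced as "the main obstacle," with the hope that $\Pi$ factors as a perfect square times a nonnegative quadratic. That hope is not justified by anything you have shown: a bivariate quartic that is nonnegative on a triangle (and possibly not globally) need not admit such a factorization, and the double zero of $\Pi$ at $(x_0,y_0)$ forced by tangency only tells you $\Pi$ has a critical zero there, not that a quadratic factor splits off. There is also a latent issue in the equality analysis you defer: after the second squaring, $\Pi=(\ell^2-f^2)(\ell^2-\tilde f^2)$ on the triangle, where $\tilde f$ is the conjugate radical with $3\sqrt{x+y}-2\sqrt{x-y}$ in place of $3\sqrt{x+y}+2\sqrt{x-y}$, so zeros of $\Pi$ must be sorted into genuine equality points ($P\ge0$) and spurious ones coming from the second factor ($P\le 0$); your sketch does not address this. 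As it stands, the proposal reduces the lemma to two unproved polynomial inequalities, so the essential content is missing.

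For comparison, the paper avoids the second squaring altogether: after forming $\Delta(x,y)=(\alpha x+\beta y+\gamma)^2-f(x,y)^2$, it substitutes $y=x\cos\zeta$ so that $\sqrt{x^2-y^2}=x\sin\zeta$ and $\Delta$ becomes a genuine quadratic in $x$, namely $g_2(\zeta)x^2-2g_1(\zeta)x+\gamma^2-4$ with trigonometric coefficients. It then minimizes over $x$ in closed form (locating the vertex $x^*(\zeta)=g_1/g_2$ and determining via an auxiliary function $\ell(\zeta)$ when the vertex lies inside $[0,1]$), which reduces everything to the nonnegativity of a single one-variable trigonometric function $h(\zeta)$ whose zeros can be found exactly by one squaring in $u=\cos\zeta$. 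If you want to salvage your two-squarings route, you would need to exhibit and verify the certificates for $P\ge0$ and $\Pi\ge0$ explicitly (clearing the $\sqrt{41}$'s), including the boundary and degenerate edges $y=\pm x$; the paper's polar substitution is the device that makes the computation tractable and is the idea your proposal lacks.
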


\begin{proof}
	Note that $\alpha x + \beta y + \gamma\geq \gamma+\alpha-\beta=13/\sqrt{41}>0$. 
	Define the difference function
	\begin{align}
		\Delta(x,y) &:= (\alpha x + \beta y + \gamma)^2 - f(x,y)^2\nonumber\\
		&\;= -4 - 108 x + 96 x^2 - 40 y - 
		25 y^2 + 48 (-1 + x)\bigl(-x + \sqrt{x^2 - y^2}\lsp\bigr)\nonumber\\
		&\;\equad + \left[ \left(\frac{5}{2} - \frac{43}{2\sqrt{41}}\right)x + \left(\frac{9}{2} - \frac{13}{2\sqrt{41}}\right)y + 2 + \frac{28}{\sqrt{41}}\right]^2.
	\end{align}
	Then, to prove \eref{eq:fxyABC}, it suffices to prove the inequality  $\Delta(x,y) \ge 0$ for  $0 \le x \le 1$ and $-x \le y \le x$. When $x=0$, which means $y=0$, it is straightforward to verify that $\Delta(x,y)>0$ and  $f(x,y)<\alpha x + \beta y + \gamma$.

	By assumption $y$ can be expressed as  $y = x\cos \zeta$ with $\zeta\in [0,\pi]$, and $\zeta$ is uniquely determined by $x$ and $y$ when $x\neq 0$. Accordingly, $\Delta(x,y)$ can be expressed as 
	\begin{gather}
		\Delta(x,y) =\Delta(x,x\cos \zeta)= 
		g_2(\zeta) x^2-2g_1(\zeta)x  + \gamma^2 - 4, 
	\end{gather}
	where
\begin{equation}\label{eq:g12}
	g_2(\zeta): =48 + (\alpha + \beta\cos \zeta)^2 - 25\cos^2 \zeta + 48\sin \zeta ,\quad    g_1(\zeta):=30 - \alpha\gamma - (\beta\gamma - 20)\cos \zeta + 24\sin \zeta.
\end{equation}
Note that	
\begin{equation}\label{eq:g12LB}
g_2(\zeta) \ge 23,\quad      g_1(\zeta)\geq g_1(0)=50-\alpha\gamma - \beta\gamma>33,
 \end{equation}
given that $\beta\gamma-20>0$. Let $x^*(\zeta) : =g_1(\zeta)/g_2(\zeta)$; then  $\Delta(x,x\cos \zeta)\geq \Delta(x^*(\zeta),x^*(\zeta)\cos \zeta)$, and the inequality is saturated iff $x= x^*(\zeta)\leq 1$. 
	
Let 
\begin{equation}\label{eq:ellzeta}
	\ell(\zeta) :=g_2(\zeta)-g_1(\zeta)= 18 + \alpha^2 + \alpha \gamma - (20 - \beta\gamma - 2\alpha\beta)\cos \zeta  -\bigl(25 - \beta^2\bigr)\cos^2 \zeta + 24\sin \zeta; 
\end{equation}
then $x^*(\zeta)\leq 1$ iff $\ell(\zeta) \geq 0$.  If $\pi/2\leq \zeta\leq \pi$, then 
\begin{align}
	\ell(\zeta)\geq 18 + \alpha^2 + \alpha \gamma-\bigl(25 - \beta^2\bigr)=\frac{907-139\sqrt{41}}{41}>0,\quad x^*(\zeta)<1, 
\end{align}
given that $(20 - \beta\gamma - 2\alpha\beta)>0$ and $25 - \beta^2>0$.
If instead $0\leq \zeta\leq \pi/2$, then $\ell(\zeta)$ is monotonically increasing in $\zeta$. Meanwhile, $\ell(0)<0$ and $\ell(\pi/2)>0$. Therefore, $\ell(\zeta)$ has a unique zero for $\zeta\in [0,\pi]$. Let $\zeta_*$ be this unique zero; numerically, we have $\zeta_* \approx 0.12988$ and $\cos \zeta_* \approx 0.99158$.  Then $x^*(\zeta_*)=1$,  $x^*(\zeta)>1$ for $\zeta\in [0,\zeta_*)$, and  $x^*(\zeta)<1$ for $\zeta\in (\zeta_*, \pi]$. Define
\begin{equation}
	\Delta^*(\zeta) := \begin{cases}
		\Delta(1,\cos \zeta) & \zeta \in [0,\zeta_*),\\
		\Delta(x^*(\zeta),x^*(\zeta)\cos \zeta) & \zeta \in [\zeta_*,\pi];
	\end{cases}
\end{equation}
then $\Delta(x,x\cos \zeta) \ge \Delta^*(\zeta)$, and the inequality is saturated iff 
\begin{align}\label{eq:xzetaMin}
	x=\begin{cases}
		1 & \zeta \in [0,\zeta_*),\\
		x^*(\zeta) & \zeta \in [\zeta_*,\pi].
	\end{cases}
\end{align}  
To prove \eref{eq:fxyABC}, it suffices to prove the inequality  $\Delta^*(\zeta) \ge 0$ for $\zeta\in [0,\pi]$.

If $\zeta \in [0,\zeta_*)$, then 
\begin{equation}\label{eq:Delta*zetaLB1}
	\Delta^*(\zeta)=\Delta(1,\cos \zeta)=\frac{1}{41}\bigl[433 + 117 \sqrt{41} + \bigl(305 + 117  \sqrt{41}\lsp\bigr) \cos\zeta\bigr]\sin^2\frac{\zeta}{2}\geq 0. 
\end{equation}

If instead $\zeta \in [\zeta_*,\pi]$, then 
\begin{align}
	\Delta^*(\zeta)=\Delta(x^*(\zeta),x^*(\zeta)\cos \zeta)=\frac{(\gamma^2-4)g_2(\zeta) -g_1(\zeta)^2}{g_2(\zeta)}=\frac{h(\zeta)}{41g_2(\zeta)},
\end{align}
where 
\begin{align}
	h(\zeta) &:= c_0 \cos (2\zeta) + c_1 \sin (2\zeta) + c_2 \cos \zeta + 
	c_3 \sin \zeta + c_4\nonumber\\
	&\;=2c_0\cos^2\zeta+ c_2 \cos \zeta+c_4-c_0+(2c_1\cos\zeta+c_3)\sin\zeta, \label{eq:hzeta}
\end{align}
with
\begin{equation}
	\begin{gathered}
		c_0 = -4197 + 977 \sqrt{41},\quad c_1 = 24 \bigl(-633 + 113 \sqrt{41}\lsp\bigr),\quad c_2 = 4\bigl(-14667 + 2191 \sqrt{41}\lsp\bigr),\\
		c_3 = 48 \bigl(-843 + 139 \sqrt{41}\lsp\bigr),\quad c_4 = -53775 + 8403 \sqrt{41}.
	\end{gathered}
\end{equation}
Note that $c_0,c_1, c_3,c_4>0$ and $c_2<0$; in addition, $g_2(\zeta)>0$ by \eref{eq:g12LB}. To prove the inequality $\Delta^*(\zeta)\geq 0$ for  $\zeta \in [\zeta_*,\pi]$, it suffices to prove the inequality $h(\zeta)\geq 0$.

Let $u=\cos\zeta$ and define
\begin{equation}
	h_2(u) := \bigl(2c_0 u^2 + c_2 u + c_4 - c_0\bigr)^2 - \left(2c_1 u + c_3\right)^2\bigl(1-u^2\bigr).
\end{equation}
Then $h_2(u)=0$ whenever $h(\zeta)=0$ according to \eref{eq:hzeta}. Calculation shows that $h_2(u)$ has the following three distinct zeros:
\begin{equation}\label{eq:u0pm}
	u_0 := \frac{-308 + 27\sqrt{41}}{565},\quad 
	u_\pm:=\frac{37529139\sqrt{41}-239145719\pm 576\sqrt{-35743460158+5587351798\sqrt{41}}}{294550033-45301173\sqrt{41}},
\end{equation}
where $u_-<u_0<u_+$ and 
the zero $u_0$ has multiplicity 2.
By contrast, $h(\zeta)$ has two distinct zeros, namely,  $\zeta_0:=\arccos u_0\approx 1.81228$ and $\zeta_+:=\arccos u_+\approx 0.07235$; note that $\arccos u_-$ is not a zero of $h(\zeta)$. 
In addition, $0<\zeta_+<\zeta_*$ and $\zeta_*<\zeta_0<\pi$, so $\zeta_0$ is the unique zero of $h(\zeta)$ within the interval $[\zeta_*,\pi]$. Straightforward calculation shows that $h(\zeta_*), h(\pi)>0$ as illustrated in \fref{fig:hzeta}, which implies that $h(\zeta),\Delta^*(\zeta)  \geq 0$ for $\zeta\in [\zeta_*,\pi]$ by continuity. 
In conjunction with \eref{eq:Delta*zetaLB1} we can deduce   that $\Delta(x,x\cos \zeta)\geq \Delta^*(\zeta)\geq 0$ for $\zeta\in [0,\pi]$, which implies  \eref{eq:fxyABC}; in addition, $\Delta^*(\zeta)$ has only two zeros in this interval, namely, 0 and $\zeta_0$.

If  $x = y = 1$ or if  $x = x_0$ and $ y = y_0$,
then the inequality in \eref{eq:fxyABC} is saturated by straightforward calculation. Conversely, if the inequality in \eref{eq:fxyABC} is saturated, then $\Delta^*(\zeta)=\Delta(x,x\cos \zeta)=\Delta(x,y)=0$, which means $\zeta=0$ or $\zeta=\zeta_0$. According to \eref{eq:xzetaMin}, if $\zeta=0$, then $y=x=1$; if instead $\zeta=\zeta_0$, then 
\begin{align}
	x=x^*(\zeta_0)=x_0, \quad y=x^*(\zeta_0)\cos\zeta_0=y_0.  
\end{align}
This observation completes the proof of \lref{lem:fxyABC}.
\end{proof}

\begin{figure}
	\centering
	\includegraphics[width = 9cm]{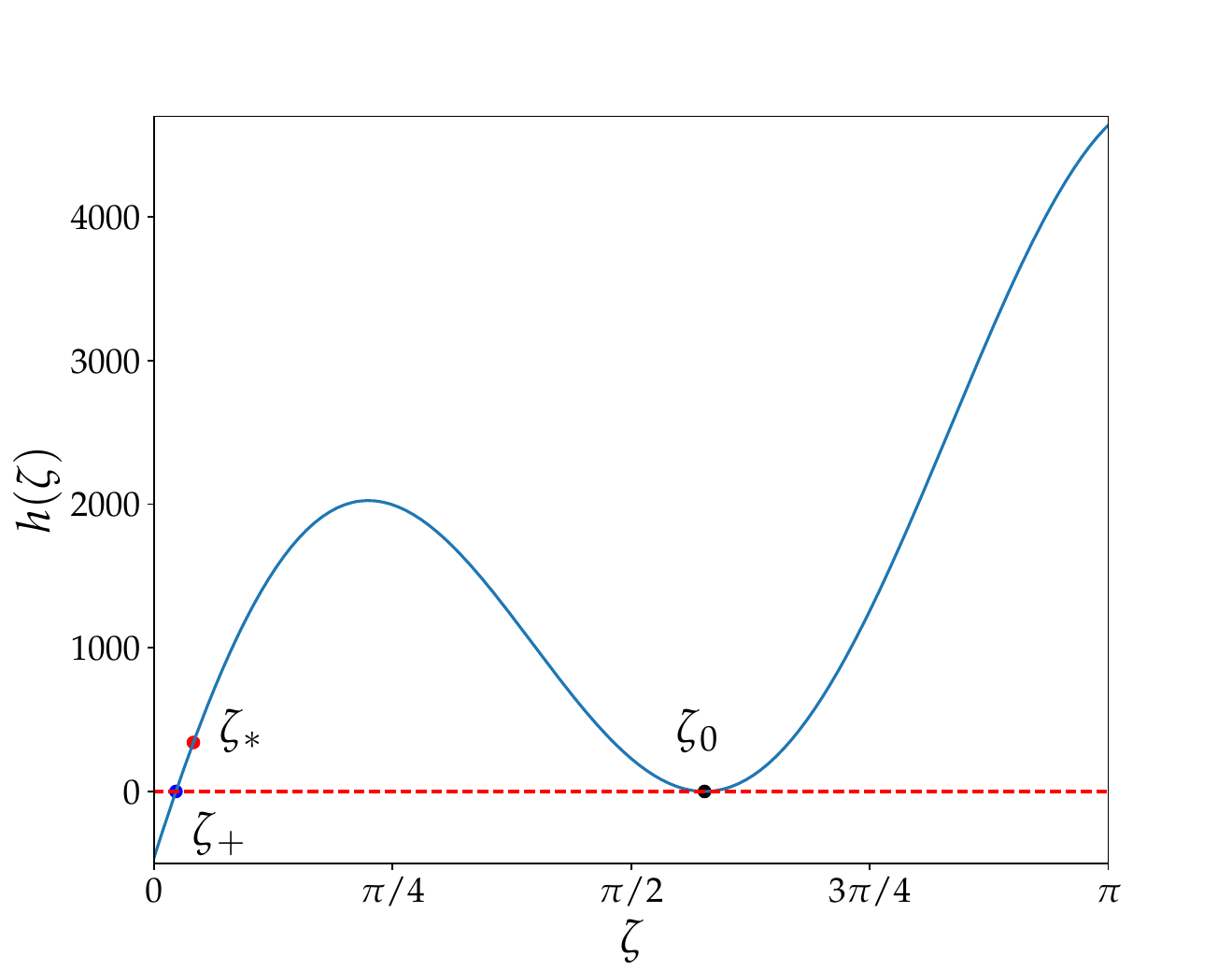}
	\caption{\label{fig:hzeta}A plot  of the function $h(\zeta)$ defined in \eref{eq:hzeta} for $\zeta\in [0,\pi]$. Here $\zeta_+=\arccos u_+$ and $\zeta_0=\arccos u_0$ are the two zeros of  $h(\zeta)$ [see  \eref{eq:u0pm}], while $\zeta_*$ is the unique zero of the function $\ell(\zeta)$ defined in \eref{eq:ellzeta}.}
\end{figure}

\begin{lemma}\label{lem:canonical1+2}
	Suppose $\scrM = \{w_j |\Psi_j\>\<\Psi_j|\}_j$ is a $1+2$ adaptive POVM on $\Sym_2(\caH)\otimes\caH$, where $|\Psi_j\>$ have the form
	\begin{equation}
		|\Psi_j\> = U_j^{\otimes 3}\bigl[\bigl(a_j|00\> + b_j|\bbS\> + c_j \rme^{\rmi \phi_j}|11\>\bigr)\otimes|0\>\bigr],
	\end{equation}
	with $U_j\in \rmU(\caH), a_j, b_j, c_j \ge 0, a_j^2 + b_j^2 + c_j^2 = 1 $, and $\phi_j\in[0,2\pi)$. Then 
	\begin{equation}\label{eq:constraint2}
		\sum_j w_j a_j^2 = \sum_j w_j b_j^2 = \sum_j w_j c_j^2 = 2,\quad \sum_j w_j a_j b_j = \sum_j a_j c_j \rme^{\rmi\phi_j} = \sum_j b_j c_j \rme^{\rmi\phi_j} = 0.
	\end{equation}
\end{lemma}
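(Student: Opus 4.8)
The plan is to refine the global completeness relation $\sum_j w_j|\Psi_j\>\<\Psi_j|=P_2\otimes\bbone$ into a family of conditional completeness relations, one for each outcome of Charlie's first measurement, and to read off the six identities in \eref{eq:constraint2} as matrix entries of those relations in the local single-qubit frame picked out by that outcome. It suffices to treat rank-$1$ POVMs, which is the standing assumption. A rank-$1$ $1+2$ adaptive POVM on $\Sym_2(\caH)\otimes\caH$ is realized by a fixed POVM $\{N_k\}_k$ on Charlie's factor $\caH$ followed, conditioned on outcome $k$, by a POVM $\{L^{(k)}_l\}_l$ on the $\rmA\rmB$ factor $\Sym_2(\caH)$, the joint element labelled $(k,l)$ being $L^{(k)}_l\otimes N_k$; rank $1$ of this product forces $N_k=\mu_k|\varphi_k\>\<\varphi_k|$ and $L^{(k)}_l=\nu_{kl}|\Phi_{kl}\>\<\Phi_{kl}|$ with $\mu_k,\nu_{kl}>0$ and unit vectors $|\varphi_k\>\in\caH$, $|\Phi_{kl}\>\in\Sym_2(\caH)$. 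After relabelling $j\leftrightarrow(k,l)$ one has $w_{(kl)}=\mu_k\nu_{kl}$ and $|\Psi_{(kl)}\>=|\Phi_{kl}\>\otimes|\varphi_k\>$ up to a phase, while completeness gives $\sum_k\mu_k|\varphi_k\>\<\varphi_k|=\bbone$ (hence $\sum_k\mu_k=2$) and $\sum_l\nu_{kl}|\Phi_{kl}\>\<\Phi_{kl}|=P_2$ for every $k$.

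Next I would pass to Charlie's local frame. Because $|\Psi_{(kl)}\>=U_{(kl)}^{\otimes 3}\bigl[(a_{(kl)}|00\>+b_{(kl)}|\bbS\>+c_{(kl)}\rme^{\rmi\phi_{(kl)}}|11\>)\otimes|0\>\bigr]$ is a product across $\rmA\rmB|\rmC$, the unitary $U_{(kl)}$ must map $|0\>$ to $|\varphi_k\>$ and $|1\>$ to $|\varphi_k^\perp\>$ up to phases, so the components of $|\Phi_{kl}\>$ along $|\varphi_k\varphi_k\>$, along $|\bbS_k\>:=(|\varphi_k\varphi_k^\perp\>+|\varphi_k^\perp\varphi_k\>)/\sqrt2$, and along $|\varphi_k^\perp\varphi_k^\perp\>$ have moduli $a_{(kl)}$, $b_{(kl)}$, $c_{(kl)}$. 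Taking the expectation of $\sum_l\nu_{kl}|\Phi_{kl}\>\<\Phi_{kl}|=P_2$ in the three symmetric states $|\varphi_k\varphi_k\>$, $|\bbS_k\>$, $|\varphi_k^\perp\varphi_k^\perp\>$, each of which is fixed by $P_2$, then yields
\begin{equation}
\sum_l w_{(kl)}a_{(kl)}^2=\sum_l w_{(kl)}b_{(kl)}^2=\sum_l w_{(kl)}c_{(kl)}^2=\mu_k ,
\end{equation}
and summing over $k$ delivers the first three identities of \eref{eq:constraint2}.

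The remaining three identities I would obtain from the off-diagonal entries of the same conditional completeness relation in the orthogonal basis $\{|\varphi_k\varphi_k\>,|\bbS_k\>,|\varphi_k^\perp\varphi_k^\perp\>\}$ of $\Sym_2(\caH)$. For example, $\<\bbS_k|\bigl(\sum_l\nu_{kl}|\Phi_{kl}\>\<\Phi_{kl}|\bigr)|\varphi_k\varphi_k\>=\<\bbS_k|P_2|\varphi_k\varphi_k\>=0$ should reduce to $\sum_l w_{(kl)}a_{(kl)}b_{(kl)}=0$, and the $|\varphi_k\varphi_k\>$--$|\varphi_k^\perp\varphi_k^\perp\>$ and $|\bbS_k\>$--$|\varphi_k^\perp\varphi_k^\perp\>$ entries should reduce to $\sum_l w_{(kl)}a_{(kl)}c_{(kl)}\rme^{\rmi\phi_{(kl)}}=0$ and $\sum_l w_{(kl)}b_{(kl)}c_{(kl)}\rme^{\rmi\phi_{(kl)}}=0$; summing over $k$ would then give \eref{eq:constraint2} in full.

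The delicate step---and the one I expect to be the main obstacle---is making this last reduction precise. The expectation-value arguments above only control $a_{(kl)},b_{(kl)},c_{(kl)}$ through their moduli, whereas the off-diagonal entries involve $\rme^{\rmi\phi_{(kl)}}$ and the relative phases of the three components of $|\Phi_{kl}\>$, which are intertwined with the residual diagonal phase freedom in the choice of $U_{(kl)}$. One therefore has to check that the canonical representation $|\Psi_j\>=U_j^{\otimes 3}\bigl[(a_j|00\>+b_j|\bbS\>+c_j\rme^{\rmi\phi_j}|11\>)\otimes|0\>\bigr]$ can be chosen compatibly with Charlie's frame $\{|\varphi_k\>,|\varphi_k^\perp\>\}$ for every $k$, so that the off-diagonal entries of $\sum_l\nu_{kl}|\Phi_{kl}\>\<\Phi_{kl}|=P_2$ are literally the combinations appearing in \eref{eq:constraint2}. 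Everything else---the reduction to rank-$1$ POVMs, the extraction of the conditional completeness relations from the $1+2$ structure, and the bookkeeping of $\mu_k$, $\nu_{kl}$ and the weights---is routine and parallels the computations already used for \lref{lem:bisep}.
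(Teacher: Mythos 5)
Your treatment of the first three identities is correct and supplies exactly the justification that the paper's one-line proof glosses over: for each of Charlie's outcomes $k$ the conditional completeness relation $\sum_l\nu_{kl}|\Phi_{kl}\>\<\Phi_{kl}|=P_2$ holds, its diagonal entries in the frame $\{|\varphi_k\varphi_k\>,|\bbS_k\>,|\varphi_k^\perp\varphi_k^\perp\>\}$ involve only the moduli $a,b,c$, and averaging over $k$ with weights $\mu_k$ gives $\sum_jw_ja_j^2=\sum_jw_jb_j^2=\sum_jw_jc_j^2=2$. The paper instead asserts outright that the de-rotated operators $\{(w_j/2)|\Phi_j\>\<\Phi_j|\}_j$ with $|\Phi_j\>=a_j|00\>+b_j|\bbS\>+c_j\rme^{\rmi\phi_j}|11\>$ form a POVM on $\Sym_2(\caH)$, which would give all six identities at once; your conditional argument is the careful version of the diagonal part of that assertion.

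The step you flag as the main obstacle cannot be repaired: the last three identities of \eref{eq:constraint2} are in fact false, and so is the paper's POVM assertion. Counterexample: let Charlie measure in the $Z$ basis and, for either outcome, let Alice and Bob measure projectively onto the orthonormal basis $\{(|00\>+|\bbS\>)/\sqrt{2},\ (|00\>-|\bbS\>)/\sqrt{2},\ |11\>\}$ of $\Sym_2(\caH)$. This is a $1+2$ adaptive rank-one POVM on $\Sym_2(\caH)\otimes\caH$ with all $w_j=1$. Since $\bigl(\frac{|00\>-|\bbS\>}{\sqrt{2}}\bigr)\otimes|0\>=Z^{\otimes 3}\bigl[\bigl(\frac{|00\>+|\bbS\>}{\sqrt{2}}\bigr)\otimes|0\>\bigr]$, both of these elements have $a_j=b_j=1/\sqrt{2}$ in canonical form; and since $a_j,b_j,c_j$ are the moduli of the components of $|\Phi_j\>$ in the frame fixed by $|\varphi_j\>$, they cannot be altered by a different choice of $U_j$. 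Hence $\sum_jw_ja_jb_j=1\neq 0$ (and likewise $\sum_jw_jb_jc_j\rme^{\rmi\phi_j}=1$), precisely because the off-diagonal entries of the completeness relation only force $\sum_l\nu_{kl}\rme^{\rmi\theta_{kl}}a_{kl}b_{kl}=0$ with outcome-dependent phases $\theta_{kl}$, as you suspected. This does not damage the paper: the proof of \thref{thm:1+2} uses only the diagonal identities (through $\sum_jp_jx_j=2/3$ and $\sum_jp_jy_j=0$), which your argument does establish; the off-diagonal claims (which are also missing factors of $w_j$ in the statement as printed) should simply be deleted from the lemma.
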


\begin{proof}
Let $|\Phi_j\>=a_j|00\> + b_j|\bbS\> + c_j \rme^{\rmi \phi_j}|11\>$, then the set $\{(w_j/2) |\Phi_j\>\<\Phi_j|\}_j$ forms a POVM on $\Sym_2(\caH)$. Therefore,
\begin{align}
\sum_j w_j|\Phi_j\>\<\Phi_j|=2P_2 =2|00\>\<00|+2|\bbS\>\<\bbS|+2|11\>\<11|, 
\end{align}
which implies \eref{eq:constraint2} and completes the proof of \lref{lem:canonical1+2}. 
\end{proof}

\subsection{Proof of \thref{thm:1+2}}

Thanks to \lref{lem:POVMsym}, to determine the maximum estimation fidelity of $1+2$ adaptive POVMs on $\caH^{\otimes 3}$, it suffices to consider $1+2$ adaptive rank-1 POVMs on $\Sym_2(\caH)\otimes \caH$.

Suppose $\scrM = \bigl\{M_j\}_j$
is an arbitrary $1+2$ rank-1 POVM on $\Sym_2(\caH)\otimes\caH$. Then the POVM elements $M_j$ can be expressed as $M_j=w_j U_j^{\otimes 3}|\Psi_j\>\<\Psi_j|U_j^{\dag\otimes 3}$, where
$w_j> 0$, $\sum_j w_j=6$, $U_j\in \rmU(\caH)$, and $|\Psi_j\>$ have the form
\begin{equation}
	|\Psi_j\> = \bigl(a_j|00\> + b_j|\bbS\> + c_j \rme^{\rmi \phi_j}|11\>\bigr)\otimes|0\>
\end{equation}
with $a_j,b_j,c_j\ge 0, a_j^2 + b_j^2 + c_j^2 = 1$, and $\phi_j\in[0,2\pi)$. By virtue of \eref{eq:EstimationFidDef} and \lsref{lem:q_abc} and \ref{lem:canonical1+2}, we can deduce that
\begin{align}
    F(\scrM)&=\frac{1}{120}\sum_j w_j \caQ(|\Psi_j\>\<\Psi_j|)=\frac{1}{120}\sum_j w_j \eta(a_j,b_j,c_j,\phi_j)
    \nonumber\\
    &\le \frac{1}{120}\sum_j w_j\bigl[15 a_j^2+10b_j^2+5c_j^2 +f(x_j,y_j)\bigr]
    =\frac{1}{2}+\frac{1}{20}\sum_{j} p_jf(x_j,y_j),  \label{eq:1+2sat}
\end{align}
where the relevant parameters satisfy the following constraints (see \lref{lem:canonical1+2}):
\begin{equation}\label{eq:opt3}
 p_j := \frac{w_j}{6},\quad x_j := a_j^2 + c_j^2,\quad y_j := a_j^2 - c_j^2,\quad
     \sum_{j} p_j = 1, \quad 
     \sum_{j} p_j y_j = 0,\quad \sum_j p_j x_j = \frac{2}{3}.
\end{equation}
Note that the inequality in \eref{eq:1+2sat} is saturated when $\phi_j=0$ for all $j$.

Next, in conjunction with \eref{eq:p0gamma} and \lref{lem:fxyABC} we can deduce that 	
\begin{equation}\label{eq:proofthe4}
F(\scrM)
\le \frac{1}{2}+\frac{1}{20}\sum_j p_j \left(\alpha x_j + \beta y_j + \gamma\right)
= \frac{1}{2} + \frac{1}{30}\alpha + \frac{1}{20}\gamma
= \frac{1}{2} + \frac{11 + \sqrt{41}}{60}.
\end{equation}
The saturation of the above inequality means either $(x_j,y_j)=(x_0,y_0)$  or $(x_j,y_j)=(1,1)$ for each $j$, where $x_0$ and $y_0$ are defined in \eref{eq:p0gamma}. In conjunction with \eref{eq:opt3} we can deduce that
\begin{align}
\sum_{j\,|\,(x_j,y_j)=(1,1)} p_j=p,
\end{align}
where $p$ is defined in \eref{eq:pSPhiW} and is reproduced in \eref{eq:p0gamma}. Moreover, the upper bound in \eref{eq:proofthe4} is saturated when $\scrM$ has the form 
\begin{equation}
\scrM=\{\scrK_0'\otimes |0\>\<0|\}\cup \{\scrK_1'\otimes |1\>\<1|\},\quad 		\scrK_0' := \{K_j\}_{j=0}^4,\quad \scrK_1':=\bigl\{X^{\otimes 2}K_jX^{\otimes 2}\bigr\}_{j=0}^4,
\end{equation}
where $K_j$ for $j=0,1,2,3,4$ are defined in \eref{eq:scrE01}. Note that $\scrK_0'$ and $\scrK_1'$ are POVMs on $\Sym_2(\caH)$. Accordingly, we can construct an optimal $1+2$ adaptive POVM on $\caH^{\otimes 3}$ as shown in \eref{eq:OptPOVM1+2} in the main text. This observation completes the proof of \thref{thm:1+2}. 

\bibliography{all_references}

\end{document}